\documentclass[pdflatex,sn-mathphys-num]{sn-jnl}

\usepackage{graphicx}%
\usepackage{multirow}%
\usepackage{amsmath,amssymb,amsfonts}%
\usepackage{amsthm}%
\usepackage{mathrsfs}%
\usepackage[title]{appendix}%
\usepackage{xcolor}%
\usepackage{textcomp}%
\usepackage{manyfoot}%
\usepackage{booktabs}%
\usepackage{algorithm}%
\usepackage{algorithmicx}%
\usepackage{algpseudocode}%
\usepackage{listings}%

\usepackage{stackengine} 

\usepackage{comment}

\theoremstyle{thmstyleone}%
\newtheorem{theorem}{Theorem}
\newtheorem{proposition}[theorem]{Proposition}%

\theoremstyle{thmstyletwo}%
\newtheorem{remark}{Remark}%

\theoremstyle{thmstylethree}%
\newtheorem{definition}{Definition}%

\newcommand{\R}{\mbox{$ \mathbb{R}  $}}

\newcommand{\pmat}[1]{\begin{pmatrix} #1 \end{pmatrix}}

\newcommand{\vs}{\mathbf{v}_\mathbf{s}}
\newcommand{\ve}{\mathbf{v}_\mathbf{e}}

\newcommand{\s}{\mathbf{s}}

\newcommand{\ee}{\mathbf{e}}

\newtheorem{corollary}{\bf Corollary}[section]

\newtheorem{case}{\bf Case}

\newcommand{\gplus}{\ensurestackMath{\stackengine{0pt}{\, \circlearrowleft\,}{+}{O}{c}{F}{F}{L}}}  

\newcommand{\bd}{\begin{definition}}
	\newcommand{\ed}{\end{definition}}
\newcommand{\bt}{\begin{theorem}}
	\newcommand{\et}{\end{theorem}}
\newcommand{\bi}{\begin{itemize}}
	\newcommand{\ei}{\end{itemize}}
\newcommand{\ben}{\begin{enumerate}}
	\newcommand{\een}{\end{enumerate}}
\newcommand{\beq}{\begin{equation}}
	\newcommand{\eeq}{\end{equation}}

\begin{document}

\title[Article Title]{Markovian dynamics of single-rebit open quantum systems with applications to colour perception \\}


\author[1]{\sur{Michel Berthier}}\email{michel.berthier@univ-lr.fr}

\author[2,3]{\sur{Gabriel Niebel}}\email{gabriel.niebel@math.u-bordeaux.fr}

\author*[2]{\sur{Edoardo Provenzi}}\email{edoardo.provenzi@math.u-bordeaux.fr}

\affil[1]{Laboratoire MIA, Batiment Pascal, P\^ole Sciences et Technologie, Universit\'e de La Rochelle, 23, Avenue A. Einstein, BP 33060, 17031\\ La Rochelle cedex, France}

\affil[2]{Universit\'e de Bordeaux, CNRS, Bordeaux INP, IMB, UMR 5251, F-33400, 351 Cours de la Lib\'eration, Talence, France}

\affil[3]{also with Huawei SASU France}


\abstract{This paper investigates the Markovian dynamics of open two-state quantum systems defined over the real numbers (rebits). Two main objectives are pursued. First, we present a comprehensive classification of Markovian rebit quantum channels, i.e. one-parameter semigroups of completely positive, trace-preserving (CPTP) maps acting on the rebit state space. We show that a full characterisation of their action can be achieved and that describing these channels as solutions of the GKSL equation allows us to explicitly identify the associated Lindblad generators and conditions for complete positivity. Second, we present an original application of this classification to colour perception. Using a recent model in which perceived colours arise from L\"uders measurements on the rebit state space, we show how chromatic distortion induced by a non-neutral illuminant can be modelled by a Markovian rebit channel that progressively diminishes colour distinguishability. Other types of channels could be used to study colour vision deficiencies. These phenomena are illustrated by simulations on digital images, highlighting the relevance of rebit Markovian dynamics in modelling colour vision.}

\keywords{Markovian dynamics, Quantum channels, Relative entropy, Chromatic distortion}



\maketitle

\section{Introduction}
\label{sec1}
The study of open quantum systems has led to the development of a rich mathematical framework for describing the dynamical evolution of quantum states, among which Markovian dynamics plays a central role. In standard quantum information theory, such studies are carried out in the complex context, where qubits (two-level quantum systems) evolve under completely positive, trace-preserving (CPTP) linear maps, hereafter referred to as channels, and are governed by master equations such as the Gorini-Kossakowski-Sudarshan-Lindblad (GKSL) equation. 

However, recent years have seen a growing interest in studying real-valued quantum systems, where the qubit is replaced by its analogue over the real numbers, known as the rebit. Several works, including \cite{Wootters:1990,McKague:2009,Aleksandrova:13,Wootters:2014,Moretti:2017quantum,Koh:2018,Renou:2021,Fuchs:2022,Chiribella:2023}, provide insightful theoretical contributions to the understanding of rebit dynamics. It is worth noting that the rebit is not a simple by-product of the qubit, and that there are dramatic differences between the real and complex settings. One of these differences is illustrated by the fact that, when $n\geq2$, the dimension of the space of $d^n\times d^n$ real symmetric matrices is strictly greater than the $n$th power of the dimension of the space of $d\times d$ real symmetric matrices \cite{Caves:2002}: $d^n(d^n + 1)/2 > (d(d + 1)/2)^n$, whereas these dimensions agree for Hermitian matrices in the complex settings. According to \cite{Wootters:1990}, in real vector space quantum mechanics, one cannot in general determine the density matrix using only local measurements. 

Another difference comes from the fact that entanglement over the real numbers differs from that of standard quantum mechanics over the complex numbers. For instance, the real state 
$
\rho=\left (I\otimes I+\sigma_y\otimes\sigma_y\right)/4
$
is complex separable because it can be written as
$
\rho=\left(\frac{1}{2}\left(I+\sigma_y\right)\otimes\frac{1}{2}\left(I+\sigma_y\right)+\frac{1}{2}\left(I-\sigma_y\right)\otimes\frac{1}{2}\left(I-\sigma_y\right)\right)/2
$ in the complex framework, but real entangled because its concurrence is equal to 1, see \cite{Caves:2001}.

In the first, purely theoretical part of this work, we derive a comprehensive classification of Markovian rebit channels. We begin by analysing the case in which the affine component of the dynamics is itself Markovian, for which the classification admits a relatively simple and explicit form, and then extend the analysis to the general situation. By establishing a precise connection with the GKSL master equation adapted to rebit systems, we are able to formulate clear conditions ensuring complete positivity of the resulting dynamical maps.

The second part of the paper is devoted to an application to a recent quantum information-based model of colour perception, in which chromatic states are represented by rebit states. We show that chromatic distortion arising from illuminant changes can be coherently modelled by non-unital Markovian rebit channels acting on states. Such dynamics naturally accounts for the progressive reduction of chromatic distinguishability when human observers are exposed to non-neutral illuminants. Other types of channels could be used to model colour vision deficiencies.

The paper is structured as follows. For clarity of exposition, the complete classification of Markovian rebit channels is developed in separate subsections of Section \ref{sec:Markovclassification}. In \ref{subsec:generalclassification}, we begin by introducing the general setting of rebit systems and recall the basic notions concerning rebit states, quantum channels, and Markovian dynamics. Subsection \ref{subsec:classmarkovian^2} is devoted to the classification of Markovian rebit channels whose affine component is itself Markovian. The technically more involved case of general Markovian rebit channels is addressed by first considering unital Markovian dynamics in  \ref{subsec:unitalgeneral}, which provides the key structural insight leading naturally to the full classification presented in \ref{sec:general}.

In Section \ref{sec:GKSL}, we establish a direct connection between Markovian rebit channels and the GKSL master equation adapted to real two-level quantum systems, deriving explicit and operational conditions for complete positivity in terms of the coefficients of the Lindblad matrix.

Finally, in Section \ref{sec:applications}, we apply these results to a quantum information-based model of colour perception, in which chromatic states are represented as rebit states. We show how illuminant-induced chromatic distortion can be coherently modelled by suitable non-unital Markovian rebit channels, and illustrate the resulting dynamics through numerical simulations.

\section{Classification of Markovian rebit channels}\label{sec:Markovclassification}

Markovian rebit channels form a subclass of general rebit channels, whose classification has been established in \cite{Alde:2023}. The main results of that work which are relevant for the present study are briefly recalled in the following subsection, in which we also define precisely the concept of Markovian behaviour for rebit channels.

\subsection{Classification of general rebit channels and Markovian behaviour}\label{subsec:generalclassification}

The rebit is the two-state quantum system with Hilbert space $\mathbb R^2$. The density matrices $\rho_{\bf s}$ representing rebit states form a convex subset $\mathcal{S}(\mathcal H(2,\R))$ of $\mathcal H(2,\R)$, the 3-dimensional $\R$-vector space of $2\times 2$ real symmetric matrices. Every $\rho_{\bf s}$ is a unit-trace, positive semi-definite matrix which can be explicitly expressed as:
\beq\label{eq:densmatrebit}
\rho_{\bf s}=\frac{1}{2}\pmat{1+s_1 & s_2 \\ s_2 & 1-s_1}=\frac 1 2 ( \sigma_0 + s_1 \sigma_1 + s_2 \sigma_2),
\eeq 
where $s_1^2+s_2^2\le 1$, $\sigma_0$ is the identity matrix $I_2$ and $\sigma_1,\sigma_2$ are the Pauli matrices with real entries, 
\begin{equation}\label{eq:Paulireal}
	\sigma_1=\begin{pmatrix}
		1 & 0 \\ 0 & -1
	\end{pmatrix}, \; \sigma_2=\begin{pmatrix}
		0 & 1 \\ 1 & 0
	\end{pmatrix}.
\end{equation} 
The density matrix $\rho_{\bf s}$ is univocally associated to a vector $\vs=(s_1,s_2)$ belonging to the unit disk $\cal D$ in $\R^2$. In this context, $\vs$ and $\cal D$ are referred to as \textit{Bloch vector} and \textit{Bloch disk}, respectively. 
Throughout this paper, a rebit state $\s$ will be equivalently identified with its density matrix $\rho_{\s}$ or its associated Bloch vector $\vs$, depending on the context.

It is shown in \cite{Alde:2023} that the quantum channels, i.e. the CPTP linear maps, of the rebit transform the Bloch disk through the following maps
\begin{equation}
	\mathcal C(\vs)=R_1{\mathcal C}_{\mathbb A}R_2 \vs=R_1 D R_2\vs+R_1{\bf w},
\end{equation}
where $R_i$ is a rotation matrix with angle $\theta_i\in [0,2\pi)$, $i=1,2$, and the affine component ${\mathcal C}_{\mathbb A}$ of ${\mathcal C}$ acts as follows
\begin{equation}\label{eq:affineaction}
	{\mathcal C}_{\mathbb A}(\vs)=D\vs+{\bf w},
\end{equation} 
with $D=\text{diag}(\lambda_1,\lambda_2)$ and ${\bf w}=(w_1,w_2)\in \cal D$. The complete positivity (CP) of these maps is expressed by the following conditions
\begin{equation}
	\label{eq:channelconstraints}
	\begin{cases}
		1+\lambda_1+\lambda_2 \geq 0\\[1ex]
		1+\lambda_1-\lambda_2 \geq 0\\[1ex]
		1-\lambda_1+\lambda_2 \geq 0\\[1ex]
		\displaystyle\frac{w_1^2}{(1+\lambda_1+\lambda_2)(1+\lambda_1-\lambda_2)}+\frac{w_2^2}{(1+\lambda_1+\lambda_2)(1-\lambda_1+\lambda_2)}\leq 1,
	\end{cases}
\end{equation}
together with the requirement that if $\max\{\lambda_1,\lambda_2\}=1$, then ${\bf w}=\bf{0}$ in order to preserve $\cal D$. 

The first three constraints imply that $\lambda_1$ and $\lambda_2$ are confined to a subset $\cal A$ of the square defined by $|\lambda_i|\le 1$, $i=1,2$. $\cal A$ is called \emph{admissibility region} and is illustrated in Figure \ref{fig:adm_region}.
\begin{figure}[!ht]
	\centering
	\includegraphics[scale=0.8]{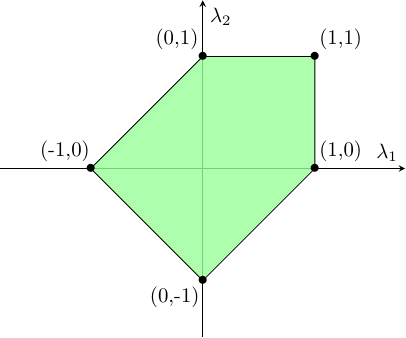}
	\caption{Admissibility region $\cal A$ for the parameters $\lambda_1,\lambda_2$.}
	\label{fig:adm_region}
\end{figure}

The action of $C_{\mathbb A}$ on the Bloch disk $\mathcal D$ is twofold. First, $\mathcal D$ is contracted into a two-dimensional ellipsoidal region with semiaxes $\lambda_1$ and $\lambda_2$. Second, the centre of $\mathcal D$ is translated by the vector $\mathbf w$. Accordingly, $\lambda_1$ and $\lambda_2$ are referred to as the \emph{scaling coefficients}, while $\mathbf w$ is called the \emph{shift vector}. A rebit channel is \emph{unital}, that is, it preserves the centre of $\mathcal D$, if and only if $\mathbf w=\mathbf 0$.

We now define the rebit channels that will be analysed in this paper.

\begin{definition}
	A Markovian rebit channel is a continuous one-parameter semigroup $\mathcal C(t)$, $t\geq 0$, of rebit channels. More precisely, the family of quantum channels
	\begin{equation}\label{eq:gen}
		\mathcal C(t)(\vs)=R_1(t)\,\mathcal C_{\mathbb A}(t)\,R_2(t)\vs
		=R_1(t)D(t)R_2(t)\vs+R_1(t)\mathbf w(t), \qquad t\geq 0,
	\end{equation}
	with
	\begin{equation}
		D(t)=\mathrm{diag}(\lambda_1(t),\lambda_2(t)), \qquad 
		\mathbf w(t)=(w_1(t),w_2(t)), \qquad t\ge 0,
	\end{equation}
	satisfies the semigroup conditions
	\begin{equation}
		\label{eq:rebitmarkov}
		\begin{cases}
			\mathcal C(t_1)\circ\mathcal C(t_2)=\mathcal C(t_1+t_2)\\
			\mathcal C(0)=I_2,
		\end{cases}
	\end{equation}
	for all $t_1,t_2\ge 0$, the map $t\mapsto \mathcal C(t)$ is continuous, and the scaling and shift functions $\lambda_i(t)$ and $w_i(t)$, $i=1,2$, satisfy the complete positivity conditions \eqref{eq:channelconstraints} for every $t\ge 0$.
\end{definition}

For the sake of brevity, in the following we shall simply refer to the CP conditions \eqref{eq:channelconstraints}, with the implicit understanding that they are required to hold for the scaling and the shift functions for all $t\ge 0$. Notice that a Markovian rebit channel ${\cal C}(t)$ is unital if ${\bf w}(t)={\bf 0}$ for all $t\geq 0$.

Thanks to a standard result, see e.g. \cite{Engel:2000}, we have that 
\begin{equation}
	\label{eq:matexpC}
	\mathcal C(t)= \exp\left(\dot {\mathcal C}(0)t\right), \quad t\ge 0,
\end{equation} 
where $\dot {\mathcal C}(0)$ is the infinitesimal generator, 
\begin{equation}
	\dot {\mathcal C}(0)=\left. \frac{d\mathcal C(t)}{dt}\right|_{t=0}.
\end{equation}
In the next subsection, we show that the classification of Markovian rebit channels is quite simple when the affine component
\begin{equation}
	\mathcal{C}_{\mathbb{A}}(t)(\vs) = D(t) \vs + \mathbf{w}(t)
\end{equation}
of the channel $\mathcal{C}(t)$ is itself Markovian.

\subsection{Classification of Markovian rebit channels with a Markovian affine component}
\label{subsec:classmarkovian^2}

The following proposition provides the exhaustive classification of Markovian rebit channels $\mathcal{C}(t)$ whose affine component $\mathcal{C}_{\mathbb{A}}(t)$ is itself Markovian.

\begin{proposition}
	\label{prop1.1}
	A Markovian rebit channel $\mathcal{C}(t)$ whose affine component $\mathcal{C}_{\mathbb{A}}(t)$ is itself Markovian admits one of the following two representations.
	\ben
	\item Given $-a := \dot\lambda_1(0)=\dot\lambda_2(0)\leq 0$ and $\omega := \dot\theta_1(0)+\dot\theta_2(0)$, 
	\begin{equation}\label{eq:affineCaffineCA1}
		\mathcal{C}(t)=\mathcal{C}_{\mathbb{A}}(t)\,R(\omega t)
		=\operatorname{diag}(e^{-a t},e^{-a t})\,R(\omega t),
	\end{equation}
	where $R(\omega t)$ denotes the rotation matrix of angle $\omega t$.
	\medskip
	\item Given $a_i := -\dot\lambda_i(0)\geq 0$ and $b_i := \dot w_i(0)$, $i=1,2$,
	\begin{equation}\label{eq:affineCaffineCA2}
		\mathcal{C}(t)=R_2(0)^{-1}\,\mathcal{C}_{\mathbb{A}}(t)\,R_2(0),
	\end{equation}
	where
	\begin{equation}
		\mathcal{C}_{\mathbb{A}}(t)(\vs)
		=\operatorname{diag}(e^{-a_1 t},e^{-a_2 t}) \vs+\mathbf{w}(t),
	\end{equation}
	and 
	\begin{equation}
		w_i(t)=\frac{b_i}{a_i}\left(1-e^{-a_i t}\right), \quad i=1,2,
	\end{equation}
	are such that the CP conditions \eqref{eq:channelconstraints} are fulfilled for all $t\geq 0$.
	\een
\end{proposition}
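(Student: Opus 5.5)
We treat the affine part and the full channel as two commuting-type semigroups and compare them with rotations. Since $\mathcal C_{\mathbb A}(t)$ is Markovian, $\mathcal C_{\mathbb A}(t_1)\mathcal C_{\mathbb A}(t_2)=\mathcal C_{\mathbb A}(t_1+t_2)$. Its linear part gives $D(t_1)D(t_2)=D(t_1+t_2)$ with $D(0)=I_2$ and $D$ continuous. Hence $\lambda_i(t)=e^{-a_it}$ with $a_i=-\dot\lambda_i(0)$, and the constraint $|\lambda_i|\le 1$ from the admissibility region $\mathcal A$ forces $a_i\ge 0$. The translational part gives $\mathbf w(t_1+t_2)=D(t_1)\mathbf w(t_2)+\mathbf w(t_1)$. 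Differentiating in $t_2$ at $0$ yields the linear ODE $\dot w_i=-a_iw_i+b_i$ with $w_i(0)=0$, whose solution is $w_i(t)=\frac{b_i}{a_i}(1-e^{-a_it})$ (read as $b_it$ if $a_i=0$, which in fact forces $b_i=0$ because $\mathbf w(t)\in\mathcal D$).

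Next we exploit the semigroup law of $\mathcal C(t)=R_1(t)\mathcal C_{\mathbb A}(t)R_2(t)$. Writing $\mathcal C(t)=\exp(\dot{\mathcal C}(0)t)$ as in \eqref{eq:matexpC}, we compute the generator by differentiating the factorisation at $t=0$. Since $\mathcal C(0)=I_2$ and $\mathcal C_{\mathbb A}(0)=I_2$, we may take $R_1(0)R_2(0)=I_2$, i.e. $R_1(0)=R_2(0)^{-1}=:R_0^{-1}$. This gives
\[
\dot{\mathcal C}(0)\mathbf v=R_0^{-1}\bigl(\operatorname{diag}(-a_1,-a_2)+(\dot\theta_1(0)+\dot\theta_2(0))J\bigr)R_0\mathbf v+R_0^{-1}\mathbf b,
\]
where $J$ is the generator of rotations. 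The key step is to impose that the channel generated by this generator has, for every $t$, a singular-value decomposition whose diagonal part is exactly $D(t)=\operatorname{diag}(e^{-a_1t},e^{-a_2t})$ and whose shift is exactly $\mathbf w(t)$, both already determined above. The linear part of $\mathcal C(t)$ is $R_0^{-1}\exp\bigl((-A+\omega J)t\bigr)R_0$ with $A=\operatorname{diag}(a_1,a_2)$ and $\omega=\dot\theta_1(0)+\dot\theta_2(0)$. Its singular values must equal $e^{-a_it}$ for all $t$. We plan to verify this by computing $\det$ and the Frobenius norm: the determinant gives $e^{-(a_1+a_2)t}$ automatically. The second-order expansion of $\|\cdot\|_F^2$ in $t$, compared with $e^{-2a_1t}+e^{-2a_2t}$, should produce a term of the form $\omega^2(a_1-a_2)^2t^4$ or similar. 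Showing that such a term must vanish, i.e. $\omega(a_1-a_2)=0$, is the expected main obstacle. It may require the exact closed form of $\exp((-A+\omega J)t)$, using eigenvalues $-\tfrac{a_1+a_2}{2}\pm\sqrt{(a_1-a_2)^2/4-\omega^2}$, instead of a Taylor expansion.

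This dichotomy yields the two cases of the statement. If $a_1=a_2=a$, then $A$ commutes with $J$, so the linear part is $e^{-at}R(\omega t)$, and conjugation by $R_0$ is trivial. For the shift, note that the fixed point of the flow, the equilibrium $(A-\omega J)^{-1}\mathbf b$, must be compatible with $\mathbf w(t)$ lying along fixed directions while the frame rotates. We expect this, together with the semigroup identity for the shift, to force $\mathbf b=0$ whenever $\omega\neq0$. The unital rotation case gives \eqref{eq:affineCaffineCA1}, and if $\omega=0$ the situation falls into case 2. If $a_1\ne a_2$, then $\omega=0$, so the generator is $R_0^{-1}(-A\,\cdot+\mathbf b)R_0$, and exponentiating gives $\mathcal C(t)=R_2(0)^{-1}\mathcal C_{\mathbb A}(t)R_2(0)$, which is \eqref{eq:affineCaffineCA2}. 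Finally, the CP conditions \eqref{eq:channelconstraints} are inherited from the hypothesis for all $t$, which completes the classification.
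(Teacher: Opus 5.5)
Your route differs from the paper's. You read off the generator $G$ of $\mathcal C(t)$ from the factorisation $R_1(t)\mathcal C_{\mathbb A}(t)R_2(t)$ and then use orthogonal invariants of $e^{tG}$. The paper instead argues directly on the semigroup law: a Cauchy equation for $\theta_1+\theta_2$, a norm argument on a chosen vector in the unital anisotropic case, and Brouwer fixed points of $\mathcal C_{\mathbb A}(t)$ and $\mathcal C(t)$ for every non-unital channel. Your anisotropic step is sound. Because it only involves the linear part, it covers unital and non-unital channels at once. The computation you call the main obstacle is routine. Set $d=\tfrac{a_1-a_2}{2}$ and $N=-d\,\sigma_1+\omega J$. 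Then $-A+\omega J=-\tfrac{a_1+a_2}{2}I_2+N$ and $N^2=(d^2-\omega^2)I_2$, which give
\[
\bigl\|e^{t(-A+\omega J)}\bigr\|_F^2-\bigl(e^{-2a_1t}+e^{-2a_2t}\bigr)=-\tfrac{1}{3}\,\omega^2(a_1-a_2)^2\,t^4+O(t^5).
\]
A fourth-order Taylor expansion therefore already forces $\omega(a_1-a_2)=0$, and no closed form is required.

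The genuine gap is the isotropic non-unital case $a_1=a_2=a>0$, $\mathbf b\neq\mathbf 0$, $\omega\neq 0$, which you only expect to be impossible. Until it is excluded, the dichotomy of the statement is not proved for non-unital channels with $\lambda_1=\lambda_2$.

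The mechanism you sketch also cannot work as stated. The shift of $\mathcal C(t)$ is $R_1(t)\mathbf w(t)$. When $\lambda_1=\lambda_2$ the linear part only sees $\theta_1+\theta_2$, so $\theta_1(t)$ alone can point $R_1(t)\mathbf w(t)$ in any direction, and no contradiction can come from directions. What is rigid is the norm, $\|R_1(t)\mathbf w(t)\|=\frac{1-e^{-at}}{a}\|\mathbf b\|$. The shift of $e^{tG}$ is $\int_0^t e^{-as}R(\omega s)R_0^{-1}\mathbf b\,ds$, and the triangle inequality for integrals gives
\[
\Bigl\|\int_0^t e^{-as}R(\omega s)R_0^{-1}\mathbf b\,ds\Bigr\|<\int_0^t e^{-as}\|\mathbf b\|\,ds=\frac{1-e^{-at}}{a}\|\mathbf b\|,\qquad t>0.
\]
The inequality is strict because for $\omega\neq 0$ the integrand rotates and is not a nonnegative multiple of a fixed vector. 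Quantitatively, the squared norms differ by $\omega^2\|\mathbf b\|^2t^4/12+O(t^5)$. Hence $\omega\,\mathbf b=\mathbf 0$. Either $\mathbf b=\mathbf 0$, which gives case 1, or $\omega=0$, in which case exponentiating $G$ yields $R_2(0)^{-1}\mathcal C_{\mathbb A}(t)R_2(0)$, which is case 2.
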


\proof
Let $\mathcal C(t)$ be a Markovian channel whose affine component ${\mathcal C}_{\mathbb A}(t)$ is Markovian. The condition 
\beq
{\cal C}_{\mathbb A}(t_1+t_2)
=
{\cal C}_{\mathbb A}(t_1)\circ{\cal C}_{\mathbb A}(t_2),
\quad t_1,t_2\ge 0,
\eeq
together with eq. \eqref{eq:gen} implies 
\beq
D(t_1+t_2)\vs+{\bf w}(t_1+t_2) = D(t_1)D(t_2)\vs+D(t_1){\bf w}(t_2)+{\bf w}(t_1),
\eeq
i.e.
\beq\label{eq:characters}
\lambda_i(t_1+t_2)=\lambda_i(t_1)\lambda_i(t_2)
\eeq 
and
\beq\label{eq:wMarkov}
w_i(t_1+t_2)=w_i(t_1)+\lambda_i(t_1)w_i(t_2),
\eeq
$i=1,2$, for all $t_1,t_2\geq 0$. Moreover, the initial condition ${\cal C}_{\mathbb A}(0)=I_2$ implies 
\begin{equation}\label{eq:incondCA}
	{\mathcal C}_{\mathbb A}(0)=R_1^{-1}(0)R_2^{-1}(0)=I_2,
\end{equation}
so ${\bf w}(0)={\bf 0}$, $\theta_1(0)=-\theta_2(0)$ and, by the CP conditions \eqref{eq:channelconstraints},  $\lambda_1(0)=\lambda_2(0)=1$, so $D(0)=I_2$.

Eq. \eqref{eq:characters} implies that $\lambda_1(t)$ and $\lambda_2(t)$ are continuous characters of the additive semigroup $[0,+\infty)$, so
\begin{equation}
	\lambda_i(t)=e^{\dot\lambda_i(0)t},
\end{equation}
with $\dot \lambda_i(0)\leq 0$, $i=1,2$, to respect the admissibility region. Introducing this result in \eqref{eq:wMarkov} and differentiating with respect to $t_2$, evaluated at $t_2=0$, we obtain
\beq
\dot w_i(t)=\dot w_i(0)\,e^{\dot\lambda_i(0)t},
\eeq
which, considering the initial condition ${\bf w}(0)={\bf 0}$, gives
\begin{equation}\label{eq:wit}
	w_i(t) = \frac{\dot w_i(0)}{\dot \lambda_i(0)} \left( e^{\dot \lambda_i(0) t} -1\right).
\end{equation}
We denote, as in the statement of the proposition, $-a_i=\dot \lambda_i(0)\leq 0$, and $b_i=\dot w_i(0)$, for $i=1,2$. Then, by eq. \eqref{eq:wit}, ${\mathcal C(t)}$ is unital if and only if $b_1=b_2=0$. Let us first analyse such unital rebit channels in the case $a_1=a_2=:a$, then
\begin{equation}
	\mathcal C(t)=R_1(t) D(t) R_2(t) = e^{-a t} R_1(t) R_2(t),
\end{equation}
so
\begin{equation}
	\mathcal C(t_1+t_2)=e^{-a (t_1 + t_2)} R_1(t_1 + t_2) R_2(t_1 + t_2)= e^{-a {t_1}} e^{-a {t_2}} R_1(t_1) R_1(t_2) R_2(t_1) R_2(t_2),
\end{equation}
and $R_1(t_1 + t_2) R_2(t_1 + t_2)= R_1(t_1)R_2(t_1) R_1(t_2)R_2(t_2)$. Consequently,
\begin{equation}
	\theta_1(t_1+t_2)+\theta_2(t_1+t_2)=(\theta_1(t_1)+\theta_2(t_1))+(\theta_1(t_2)+\theta_2(t_2)),
\end{equation}
so, the continuous angular function $\theta(t)=\theta_1(t)+\theta_2(t)$ satisfies the Cauchy functional equation $\theta(t_1 + t_2) = \theta(t_1) + \theta(t_2)$ for all $t_1,t_2\ge 0$, which implies that $\theta(t)=\omega t$, with $\omega=\dot\theta_1(0)+\dot\theta_2(0)$, thus giving the expression \eqref{eq:affineCaffineCA1}. 

Suppose now that $a_1\neq a_2$. By the Markovian property of $\mathcal C_{\mathbb A}$ we have
\begin{equation}
	R_1(t_1) D(t_1) R_2(t_1) R_1(t_2) D(t_2) R_2(t_2) \vs = R_1(t_1 + t_2) D(t_1 + t_2) R_2(t_1 + t_2) \vs,
\end{equation} 
for all $\vs\in\mathcal D$, hence
\begin{equation}
	\|D(t_1) R_2(t_1) R_1(t_2) D(t_2) R_2(t_2) {\vs}\| = \|D(t_1 + t_2) R_2(t_1 + t_2) {\vs}\|.
\end{equation}
The diagonal matrix $D(t)$ has eigenvalues $e^{-a_1t}$ and $e^{-a_2t}$, with eigenvectors of coordinates $(1,0)$ and $(0,1)$, respectively. So, if we consider
\begin{equation}
	{\vs}= R_2^{-1}(t_1 + t_2)\begin{pmatrix} 1\\0\end{pmatrix},
\end{equation}
then we have 
\begin{equation} 
	\|D(t_1 + t_2) R_2(t_1 + t_2) {\vs}\|= e^{-a_1 (t_1 + t_2)},
\end{equation} 
and the equality
\begin{equation}
	\|D(t_1) R_2(t_1) R_1(t_2) D(t_2) R_2(t_2) {\vs}\| = e^{-a_1 (t_1 + t_2)}
\end{equation}
can be fulfilled if and only if
\begin{equation}
	\label{eq:r2sts}
	R_2(t_2) = \pm R_2(t_1 + t_2),
\end{equation}
and
\begin{equation}
	\label{eq:r2r1}
	R_2(t_1) R_1(t_2) = \pm I_2,
\end{equation}
for all $t_1,t_2\geq 0$.
Equation \eqref{eq:r2sts} implies $\theta_2(t_2) = \theta_2(t_1 + t_2)$ or $\theta_2(t_2) = \theta_2(t_1 + t_2) + \pi$, for all $t_1,t_2\ge 0$. This means that $R_2(t)$ is a constant rotation and, by eq. \eqref{eq:r2r1}, $R_2(t)=R_2(0) = \pm R_1^{-1}(t)$. Since $C(0)$ is the identity, $R_1(t) = R_2^{-1}(0)$, and we obtain expression \eqref{eq:affineCaffineCA2} in the unital case.

Suppose now that $\mathcal C(t)$ is non-unital. By Brouwer's fixed point theorem, together with the uniqueness of solutions to ordinary differential equations, ${\mathcal C}_{\mathbb A}(t)$ and ${\mathcal C}(t)$ have a fixed point, that we indicate with ${\bf v}_{{\bf s}_0}$ and ${\bf v}_{{\bf s}_1}$, respectively, both contained in $\mathcal D$ for all $t\geq 0$. From
\beq
{\bf v}_{{\bf s}_0}={\mathcal C}_{\mathbb A}(t){\bf v}_{{\bf s}_0}=D(t){\bf v}_{{\bf s}_0}+{\bf w}(t),
\eeq 
we get
\begin{equation}
	{\bf w}(t)={\bf v}_{{\bf s}_0}-D(t) {\bf v}_{{\bf s}_0},
\end{equation}
so
\begin{equation}
	{\bf v}_{{\bf s}_1}=\mathcal C(t) {\bf v}_{{\bf s}_1} = R_1(t) \left[D(t) R_2(t) {\bf v}_{{\bf s}_1} + {\bf v}_{{\bf s}_0} - D(t) {\bf v}_{{\bf s}_0}\right],
\end{equation}
or, equivalently
\begin{equation}
	\label{eq:rot}
	D(t) (R_2(t) {\bf v}_{{\bf s}_1} - {\bf v}_{{\bf s}_0}) = R_1^{-1}(t) {\bf v}_{{\bf s}_1} - {\bf v}_{{\bf s}_0}, \quad t\geq 0.
\end{equation}
Since $D(t)$ is invertible, the left-hand side of \eqref{eq:rot} vanishes if and only if $R_2(t)$ is constant, that is, $R_2(t)=R_2(0)$ for all $t\ge 0$, with $R_2(0)\mathbf v_{\mathbf s_1}=\mathbf v_{\mathbf s_0}$. In this case, it coincides with the right-hand side of \eqref{eq:rot} if and only if $R_1(t)$ is constant and satisfies $R_1(t)=R_2^{-1}(0)$ for all $t\ge 0$. The channel $\mathcal C(t)$ is therefore given by the expression \eqref{eq:affineCaffineCA2} in the non-unital case.

Finally, we remark that eq. \eqref{eq:rot} cannot hold whenever one side of the equation is non-null. To prove this, consider the set of points defined by the right-hand side, that is 
\begin{equation}
	A = \{ (x, y) \in \mathbb{R}^2 \mid (x, y) = R_1^{-1}(t) {\bf v}_{{\bf s}_1} - {\bf v}_{{\bf s}_0}, \, t \ge 0 \}.
\end{equation}
The subset $A$ is included in the circle with equation
\begin{equation}
	(x + {\bf v}_{{\bf s}_0,1})^2 + (y + {\bf v}_{{\bf s}_0,2})^2 = \| {\bf v}_{{\bf s}_1}  \|^2.
\end{equation}
Instead, the set of points defined by the left-hand side is
\begin{equation}
	B = \{ (x, y) \in \mathbb{R}^2 \mid (x, y) = D(t) (R_2(t) {\bf v}_{{\bf s}_1}  - {\bf v}_{{\bf s}_0} ), \, t \ge 0 \},
\end{equation}
which is a subset of the ellipse with equation
\begin{equation}
	\frac{(x + e^{-a_1 t} {\bf v}_{{\bf s}_0,1})^2}{e^{-2a_1 t}} + \frac{(y + e^{-a_2 t} {\bf v}_{{\bf s}_0,2})^2}{e^{-2a_2 t}} = \|  {\bf v}_{{\bf s}_1} \|^2.
\end{equation}
Since no point remains in the intersection of $A$ and $B$ for all $t \ge 0$, eq. \eqref{eq:rot} cannot hold whenever one side is non-null. This completes the proof of the Proposition \ref{prop1.1}.
\qed 

\medskip 

\noindent The Markovian rebit channels described by eq. \eqref{eq:affineCaffineCA1} are unital and \textit{depolarizing}, i.e. dynamical maps that uniformly contract all rebit states towards the maximally mixed state at the centre of the Bloch disk, see e.g. \cite{Heinosaari:2011} for more details.

Instead, in general, Markovian rebit channels described by eq. \eqref{eq:affineCaffineCA2} are neither depolarizing nor unital. In particular, if $R_2(0)=I_2$, then the first three inequalities of the CP conditions \eqref{eq:channelconstraints} are automatically satisfied and the only non-trivial constraint that remains gives
\begin{equation}
	\label{eq:adm}
	\frac{\left(b_1/a_1\right)^2\left(1-e^{-a_1t}\right)^2}{\left(1+e^{-a_1t}\right)^2-e^{-2a_2t}}
	+\frac{\left(b_2/a_2\right)^2\left(1-e^{-a_2t}\right)^2}{\left(1+e^{-a_2t}\right)^2-e^{-2a_1t}}\le 1, \quad t\geq 0.
\end{equation}

\begin{remark}\label{rem1.1}
	Inequality \eqref{eq:adm} implies that the stationary state of ${\cal C}_{\mathbb A}(t)$ in eq. \eqref{eq:affineCaffineCA2} has a Bloch vector 
	\beq
	{\bf v}_{{\bf s}_0}=\left(\frac{b_1}{a_1},\frac{b_2}{a_2}\right)
	\eeq 
	which belongs to the Bloch disk $\mathcal D$. However, Figure \ref{figCP} shows that the converse does not hold: inequality \eqref{eq:adm} is not equivalent to the requirement that $\mathbf{v}_{\mathbf{s}_0}\in \cal D$. 
	
	It is not straightforward to extract from inequality \eqref{eq:adm} explicit conditions on $a_1$, $a_2$, $b_1$, and $b_2$ that guarantee the complete positivity of dynamical maps of the form \eqref{eq:affineCaffineCA2}. This observation motivates the analysis conducted via the GKSL equation that will be carried out in Section \ref{sec:GKSL}.

	\begin{figure}[ht!]
		\centering
		\includegraphics[scale=0.25]{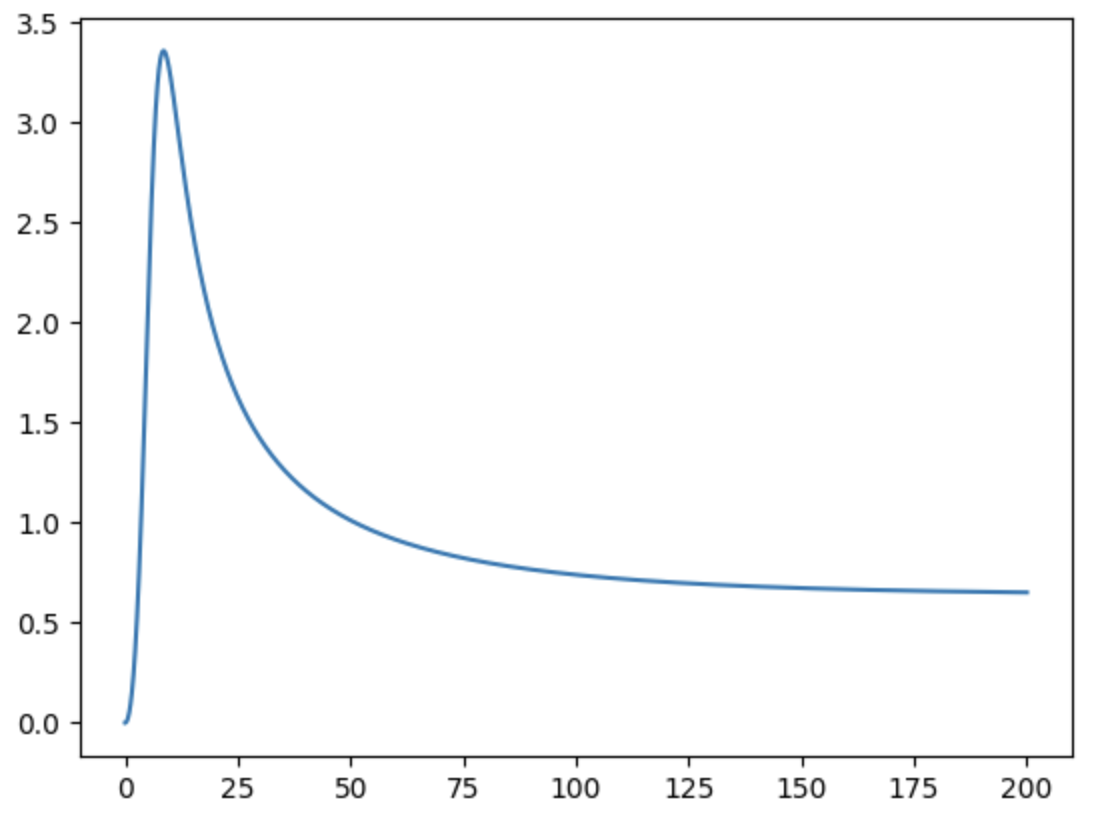}
		\caption{Curve representing the numerical value of the left hand side of inequality \eqref{eq:adm}, with $a_1=0.5$, $a_2=0.01$, $b_1=0.4$, and $b_2=0$, with $t$ varying from $0$ to $200$. Despite the fact that $(b_1/a_1, b_2/a_2)\in \cal D$, the graph shows a violation of inequality \eqref{eq:adm}.}
		\label{figCP}
	\end{figure}
	
\end{remark}

An explicit counterexample showing that the affine component of a Markovian channel need not be Markovian in general is given by
\begin{equation}
	\mathcal C(t) =
	\begin{pmatrix}
		e^{-t}(1 + t) & -e^{-t} t \\
		e^{-t} t & e^{-t}(1 - t)
	\end{pmatrix}
	= R(t)\,\mathcal C_{\mathbb A}(t)\,R(t),
\end{equation}
where $R(t)$ is the rotation of angle $\theta(t)=\arctan(t)/2$.
One readily checks that
\begin{equation}
	\lambda_1(t) = e^{-t}\big(\sqrt{1+t^2}+t\big), \qquad
	\lambda_2(t) = e^{-t}\big(\sqrt{1+t^2}-t\big).
\end{equation}
These functions coincide with those appearing in \eqref{eq:lambdaqnull} for $q=0$, $l_0=-1$, and $l_1=1$, so $\mathcal C(t)$ is indeed a Markovian rebit channel.
However, since $\lambda_i(t)\neq \exp(\dot\lambda_i(0)\,t)$ for $i=1,2$, the affine component $\mathcal C_{\mathbb A}(t)$ does not define a Markovian semigroup.

\subsection{Classification of unital Markovian rebit channels}\label{subsec:unitalgeneral}

The counterexample discussed above motivates the analysis developed in this subsection and in the next, where we do not assume that the affine component $\mathcal C_{\mathbb A}(t)$ is itself Markovian. In the present subsection we provide a complete classification of \emph{unital} Markovian rebit channels, while in \ref{sec:general} we remove the unitality assumption and treat the general case.

The following notation will simplify the subsequent computations:
\beq\label{eq:l-notations} 
l_0:=\frac{\dot\lambda_1(0)+\dot\lambda_2(0)}{2}, \quad l_1:=\frac{\dot\lambda_1(0)-\dot\lambda_2(0)}{2}, \quad  l_3:=\dot\theta_1(0)+\dot\theta_2(0), \quad q:=l_3^2-l_1^2,
\eeq
\beq
\sigma_3 := \sigma_2\sigma_1=\pmat{0 & -1 \\ 1 & 0}.
\eeq 
In what follows, without loss of generality, we assume that $\dot\lambda_2(0)\le \dot\lambda_1(0)$, i.e.
\beq\label{eq:l1assumption}
l_1 \geq 0.
\eeq
Since $\lambda_1(0)=1$ and $\lambda_1(t)\le 1$ for all $t\ge 0$ by the CP conditions \eqref{eq:channelconstraints}, one necessarily has 
\beq\label{eq:necessary}
l_1+l_0=\dot\lambda_1(0)\le 0.
\eeq 
This is coherent with assumption \eqref{eq:l1assumption} only if 
\beq\label{eq:l0assumption}
l_0\leq 0.
\eeq   

\begin{proposition}\label{theo1.1}
	Given the notations above and the unital Markovian rebit channel
	\begin{equation}
		\mathcal C_0(t) = R_1(t) D(t) R_2(t)=\exp\left(\dot {\mathcal C_0}(0)t\right),
	\end{equation}
	with $R_1$, $R_2$, and $D$ continuously differentiable, it holds that
	\beq
	\mathcal C_0(t)=R_2^{-1}(0)\widetilde{\mathcal C}_0(t)R_2(0),
	\eeq 
	where $\widetilde{\mathcal C}_0(t)$ has null initial rotation angles and can be written in one of the following ways.
	\begin{enumerate}
		\item If $q=0$, then
		\begin{equation}
			\widetilde{\mathcal C}_0(t)=e^{l_0t}\left[ \sigma_0 + l_1t(\sigma_1\pm \sigma_3)\right].
			\label{eq:SC}
		\end{equation}
		
		\item If $q<0$, then 
		\begin{equation}
			\widetilde{\mathcal C}_0(t)=e^{l_0t}\left[ \cosh(t\sqrt{-q})\,\sigma_0 + \frac{\sinh(t\sqrt{-q})}{\sqrt{-q}}(l_1\sigma_1+l_3\sigma_3)\right].
			\label{eq:SB}
		\end{equation} 
		
		\item If $q>0$, then
		\begin{equation}
			\widetilde{\mathcal C}_0(t)=e^{l_0t}\left[ \cos(t\sqrt{q})\,\sigma_0 + \frac{\sin(t\sqrt{q})}{\sqrt{q}}(l_1\sigma_1+l_3\sigma_3)\right].
			\label{eq:SA}
		\end{equation}
	\end{enumerate}
	In all three cases, the necessary condition \eqref{eq:necessary}, i.e. $l_1+l_0\leq 0$, or equivalently $\dot\lambda_1(0)\leq 0$, is also sufficient for the validity of the CP Conditions \eqref{eq:channelconstraints}.
\end{proposition}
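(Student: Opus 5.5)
The plan is to compute the infinitesimal generator explicitly and then exponentiate it.

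\textbf{Step 1: reduction and computation of the generator.} Since $\mathcal C_0(0)=I_2$, we have $R_1(0)D(0)R_2(0)=I_2$. By the CP conditions this forces $D(0)=I_2$ and $R_1(0)=R_2^{-1}(0)$. Conjugating by $R_2(0)$ preserves the semigroup property, so we may replace $\mathcal C_0$ by $\widetilde{\mathcal C}_0(t)=R_2(0)\mathcal C_0(t)R_2^{-1}(0)$. This is again of the form $\widetilde R_1(t)D(t)\widetilde R_2(t)$, with null initial angles. Differentiating at $t=0$ with the product rule, and using $\dot R_i(0)=\dot\theta_i(0)\sigma_3$ when the angle vanishes, gives
\begin{equation*}
\dot{\widetilde{\mathcal C}}_0(0)=\dot D(0)+(\dot\theta_1(0)+\dot\theta_2(0))\sigma_3=l_0\sigma_0+l_1\sigma_1+l_3\sigma_3 .
\end{equation*}
Here $\dot D(0)=\mathrm{diag}(\dot\lambda_1(0),\dot\lambda_2(0))=l_0\sigma_0+l_1\sigma_1$, by the notations \eqref{eq:l-notations}.

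\textbf{Step 2: exponentiation.} By \eqref{eq:matexpC}, $\widetilde{\mathcal C}_0(t)=e^{l_0t}\exp\big(t(l_1\sigma_1+l_3\sigma_3)\big)$, because $\sigma_0$ commutes with everything. Set $M=l_1\sigma_1+l_3\sigma_3$. The matrices $\sigma_1$ and $\sigma_3$ anticommute, with $\sigma_1^2=\sigma_0$ and $\sigma_3^2=-\sigma_0$. Hence
\begin{equation*}
M^2=(l_1^2-l_3^2)\sigma_0=-q\,\sigma_0 .
\end{equation*}
Splitting the exponential series into even and odd powers then gives the three cases directly: $\cosh/\sinh$ when $q<0$, $\cos/\sin$ when $q>0$, and $\sigma_0+tM$ when $q=0$. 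In the last case $l_3=\pm l_1$, so $M=l_1(\sigma_1\pm\sigma_3)$, which is \eqref{eq:SC}.

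\textbf{Step 3: sufficiency of $l_0+l_1\le0$ for CP (the main obstacle).} Since the channel is unital, only the first three inequalities of \eqref{eq:channelconstraints} matter. These are invariant under the rotations, so we must identify the singular values $\lambda_{1,2}(t)$ of $\widetilde{\mathcal C}_0(t)$.

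Write $\widetilde{\mathcal C}_0(t)=e^{l_0t}(\alpha\sigma_0+\beta\sigma_1+\gamma\sigma_3)$. For a matrix $x\sigma_0+y\sigma_1+z\sigma_3$ (entries $[[x+y,-z],[z,x-y]]$), the singular values are $\sqrt{x^2+z^2}\pm|y|$, up to sign. Here $x=\alpha$, $y=\beta$, $z=\gamma$. The constraints $|\lambda_1|\pm\lambda_2\le 1$ then reduce to showing
\begin{equation*}
e^{l_0t}\big(\sqrt{\alpha^2+\gamma^2}+|\beta|\big)\le1 \quad\text{for all } t\ge0,
\end{equation*}
together with the sign condition $1+\lambda_1+\lambda_2\ge0$. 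The sign condition is automatic if we choose the signs so that $\lambda_1\lambda_2=\det$ and check that the largest value is at most $1$; the admissibility region $\mathcal A$ is then the square intersected with the half-planes, which holds once the largest singular value is at most $1$.

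The remaining step is to bound $f(t)=\sqrt{\alpha^2+\gamma^2}+|\beta|$ for each sign of $q$, since $\beta=l_1\,s(t)$ and $\gamma=l_3\,s(t)$ with $s$ the sinh/sin/linear factor.

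\emph{Case $q<0$.} Here $\alpha^2+\gamma^2=\cosh^2+l_3^2\sinh^2/(-q)$. I expect $f(t)\le e^{l_1t}$, and then $e^{(l_0+l_1)t}\le1$. To prove the bound, compare derivatives, or write $f=\cosh(\mu t)\ldots$ with $\mu=\sqrt{l_1^2-l_3^2}\le l_1$ and use $\sqrt{\cosh^2+k^2\sinh^2}+|l_1|\sinh/\mu\le e^{l_1 t}$. This looks like a convexity or derivative argument: both sides equal $1$ at $t=0$, and the derivatives satisfy $f'\le l_1 f$.

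\emph{Cases $q\ge0$.} I would aim for a Grönwall-type inequality of the same form. Differentiating $f$ and using $M^2=-q\sigma_0$, the desired estimate $f'\le l_1 f$ should follow from $\|Mv\|\le\ldots$. Alternatively, use that the operator norm satisfies $\|e^{tM}\|\le e^{t\,\mu_{\max}}$, where $\mu_{\max}$ is the largest eigenvalue of the symmetric part of $M$. That symmetric part is $l_1\sigma_1$, whose largest eigenvalue is $l_1$. This logarithmic-norm bound handles all three cases at once, so I would use it as the main argument:
\begin{equation*}
\|\widetilde{\mathcal C}_0(t)\|\le e^{(l_0+l_1)t}\le1 .
\end{equation*}

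Necessity was already shown in \eqref{eq:necessary}.
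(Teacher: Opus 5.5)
Your proposal is correct. Steps 1--2 are the paper's own: conjugation to null initial angles, the generator $l_0\sigma_0+l_1\sigma_1+l_3\sigma_3$, and exponentiation via $(l_1\sigma_1+l_3\sigma_3)^2=-q\,\sigma_0$. The genuine difference is how you prove that $l_0+l_1\le 0$ is sufficient.

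\textbf{The paper's route.} It equates \eqref{eq:tildeC0} with \eqref{eq:caseschannel} and solves the resulting trigonometric system. This gives closed forms for $\lambda_{1,2}(t)$ and $\theta(t)$ in each sign case. It then argues case by case that $\dot\lambda_1$ cannot vanish where $\lambda_1\le 1$ could fail. For $q>0$ the two scaling functions cross, so $\dot\lambda_2$ must be analysed as well.

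\textbf{Your route.} You use a single dissipativity (logarithmic-norm) estimate. Since $\sigma_3$ is antisymmetric, for $x'=Mx$ with $M=l_1\sigma_1+l_3\sigma_3$,
\begin{equation*}
\frac{d}{dt}\|x\|^2=2x^TMx=2l_1\,x^T\sigma_1x\le 2l_1\|x\|^2 .
\end{equation*}
Hence $\max_i|\lambda_i(t)|=\|\widetilde{\mathcal C}_0(t)\|\le e^{(l_0+l_1)t}\le 1$, whatever the sign of $q$.

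\textbf{Comparison.} Your argument is shorter and avoids the oscillatory case entirely. It also gives a quantitative decay rate and makes transparent why $l_3$ plays no role in the CP condition. What it does not give is the explicit $\lambda_i(t)$ and $\theta(t)$, which the paper later needs for Corollary \ref{eq:corRDR}, Proposition \ref{coro1.1} and the counterexample. Your singular-value formula $\sqrt{x^2+z^2}\pm|y|$ does recover \eqref{eq:lambda1qneg}--\eqref{eq:lambdasqpositive2} faster than solving the trigonometric system, so the $\lambda_i(t)$ are within easy reach. The tentative case-by-case sketches for $q<0$ and $q\ge 0$ can be dropped.

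\textbf{One sentence to repair.} It is false that the CP constraints hold ``once the largest singular value is at most $1$''. A point of the square $|\lambda_i|\le 1$ need not lie in $\mathcal A$; for example $(\lambda_1,\lambda_2)=(0.9,-0.5)$ violates $1-\lambda_1+\lambda_2\ge 0$. What you actually need is $\lambda_1,\lambda_2>0$. This follows from $\det\widetilde{\mathcal C}_0(t)=e^{2l_0t}\det e^{tM}=e^{2l_0t}>0$, because $M$ is traceless. Equivalently, $\alpha^2+\gamma^2-\beta^2=1$, so $\sqrt{\alpha^2+\gamma^2}>|\beta|$. Together with $\lambda_i(0)=1$ and continuity, this gives $0<\lambda_i(t)\le 1$, and then all three inequalities follow, exactly as in the paper's reduction to \eqref{eq:CPunital}.
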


\proof Modulo an orthogonal conjugation, the initial rotation angles can always be assumed to vanish. In fact, eq. \eqref{eq:incondCA} and the initial condition imply
\beq
D(0)=I_2, \quad \theta_1(0)=-\theta_2(0), \quad  R_1(0)=R_2^{-1}(0),
\eeq 
therefore
\beq\label{eq:deftildeC}
\widetilde{\mathcal C}_0(t):=R_2(0)\,\mathcal C_0(t)\,R_2^{-1}(0)
\eeq
is a unital Markovian rebit channel with null initial angles, $\theta_1(0)=\theta_2(0)=0$, and such that 
\beq\label{eq:CfromtildeC}
\mathcal C_0(t)=R_2^{-1}(0)\widetilde{\mathcal C}_0(t)R_2(0).
\eeq 
If we write $D(t)$ and $R_i(t)$, $i=1,2$, as follows
\begin{equation}
	D(t)=\frac{\lambda_1(t)+\lambda_2(t)}{2}\sigma_0
	+\frac{\lambda_1(t)-\lambda_2(t)}{2}\sigma_1, \quad R_i(t)=\cos\theta_i(t)\sigma_0+\sin\theta_i(t)\sigma_3,
\end{equation}
by direct computation we obtain 
\beq\label{eq:tildeC0}
\begin{split}
	\widetilde {\mathcal C}_0(t)&=
	\frac{\lambda_1(t)+\lambda_2(t)}{2}\cos\left(\theta_1(t)+\theta_2(t)\right)\sigma_0+\frac{\lambda_1(t)-\lambda_2(t)}{2}\cos\left(\theta_1(t)-\theta_2(t)\right)\sigma_1\\
	& +\frac{\lambda_1(t)-\lambda_2(t)}{2}\sin\left(\theta_1(t)-\theta_2(t)\right)\sigma_2+\frac{\lambda_1(t)+\lambda_2(t)}{2}\sin\left(\theta_1(t)+\theta_2(t)\right)\sigma_3.
\end{split}
\eeq
Assuming that $R_1$, $R_2$, and $D$ are continuously differentiable, by differentiating
$\widetilde{\mathcal C}_0(t)$ at $t=0$, using $\lambda_1(0)=\lambda_2(0)=1$ and
$\theta_1(0)=\theta_2(0)=0$, we can write the generator of $\widetilde {\mathcal C}_0(t)$ as follows
\begin{equation}
	\dot{\widetilde{\mathcal C}}_0(0)
	=
	\frac{\dot\lambda_1(0)+\dot\lambda_2(0)}{2}\sigma_0
	+\frac{\dot\lambda_1(0)-\dot\lambda_2(0)}{2}\sigma_1
	+\bigl(\dot\theta_1(0)+\dot\theta_2(0)\bigr)\sigma_3,
\end{equation}
or, taking into account the notation \eqref{eq:l-notations}, as follows
\begin{equation}\label{eq:dotC0}
	\dot{\widetilde{\mathcal C}_0}(0)=l_0\sigma_0+l_1\sigma_1+l_3\sigma_3.
\end{equation}
The absence of the Pauli matrix $\sigma_2$ in the expression of the generator imposes a strong structural constraint, which will be exploited in the sequel.

The semigroup generated by $\dot{\widetilde{\mathcal C}_0}(0)$ is
\begin{equation}\label{eq:expgenerator}
	\widetilde{\mathcal C}_0(t)=\exp\left(\dot{\widetilde{\mathcal C}_0}(0)t\right)=e^{l_0t}\exp\left[(l_1\sigma_1+l_3\sigma_3)t\right].
\end{equation}
Since the matrix $(l_1\sigma_1+l_3\sigma_3)$ belongs to the Lie algebra $\mathfrak{sl}(2,\mathbb R)$, the semigroup $\widetilde{\mathcal C}_0(t)$ decomposes into the product of a contractive homothety with ratio $\exp(l_0 t)$ (since $l_0\leq 0$) and an area-preserving linear map in $\text{SL}(2,\mathbb R)$. It can be verified that, for all $k\in\mathbb{N}$,
\beq
(l_1\sigma_1+l_3\sigma_3)^{2k} = (-q)^k \sigma_0, \quad 
(l_1\sigma_1+l_3\sigma_3)^{2k+1} = (-q)^k (l_1\sigma_1+l_3\sigma_3),
\eeq
which leads to 
\begin{equation}\label{eq:powerseries}
	\exp\left[(l_1\sigma_1+l_3\sigma_3)t\right]
	= \sum_{k=0}^\infty \frac{(-q)^k}{(2k)!} t^{2k}\,\sigma_0 
	+ \sum_{k=0}^\infty \frac{(-q)^k}{(2k+1)!} t^{2k+1}(l_1\sigma_1+l_3\sigma_3). 
\end{equation}
Following \cite{Rubilar:2020}, the matrix $(l_1\sigma_1+l_3\sigma_3)$ belongs to one and only one of the adjoint orbits of $\mathfrak{sl}(2,\mathbb R)$, which are characterized by the sign of $q=l_3^2-l_1^2$. This observation and eq. \eqref{eq:powerseries}  lead, after straightforward computations, to the following possible explicit forms of $\widetilde{\mathcal C}_0(t)$:
\beq\label{eq:caseschannel}
\widetilde{\mathcal C}_0(t)=e^{l_0t} \cdot
\begin{cases}
	\displaystyle
	\cos(t\sqrt{q})\,\sigma_0
	+
	\frac{\sin(t\sqrt{q})}{\sqrt{q}}\,
	(l_1\sigma_1+l_3\sigma_3),
	& q>0\\[1.2ex]
	\displaystyle
	\cosh(t\sqrt{-q})\,\sigma_0
	+
	\frac{\sinh(t\sqrt{-q})}{\sqrt{-q}}\,
	(l_1\sigma_1+l_3\sigma_3),
	& q<0 \\[1.2ex]
	\displaystyle
	\sigma_0
	+
	t\,l_1(\sigma_1\pm \sigma_3),
	& q=0.
\end{cases}
\eeq
Note in particular that if $l_1=l_3=0$, then $\widetilde{\mathcal C}_0(t)=e^{l_0t}\sigma_0$. This verifies the explicit expressions of $\widetilde{\mathcal C}_0(t)$ appearing in Proposition \ref{theo1.1}. 

The only claim that remains to be established concerns the complete positivity (CP) conditions.
We first observe that, for the unital channel $\widetilde{\mathcal C}_0(t)$, the CP conditions \eqref{eq:channelconstraints} are equivalent to
\beq\label{eq:CPunital}
\lambda_1(t)\leq 1 \quad \text{and} \quad \lambda_2(t)\leq 1 , \qquad \forall t\geq 0 .
\eeq
Indeed, $\widetilde{\mathcal C}_0(t)$ is invertible for all $t\geq 0$, which implies $\det D(t)\neq 0$, or equivalently $\lambda_1(t)\lambda_2(t)\neq 0$ for all $t\geq 0$. Since $\lambda_1(0)=\lambda_2(0)=1$, continuity ensures that both $\lambda_1(t)$ and $\lambda_2(t)$ remain strictly positive for all $t\geq 0$. Under this condition, the CP constraints \eqref{eq:channelconstraints} reduce precisely to the inequalities in \eqref{eq:CPunital}.

In the remainder of the proof, we derive the explicit expressions of $\lambda_1(t)$ and $\lambda_2(t)$ by equating \eqref{eq:tildeC0} and \eqref{eq:caseschannel} in each of the three cases distinguished by the sign of $q$. We then show that, in every case, the CP conditions \eqref{eq:CPunital} are equivalent to the inequality $l_1+l_0\le 0$, that is, $\dot\lambda_1(0)\le 0$.

\begin{case}[$q=0$]
	\normalfont{Suppose $q=l_3^2-l_1^2=0$, and consider the expression 
		\begin{equation}
			\widetilde{\mathcal C}_0(t)=e^{l_0t}\left(\sigma_0+l_1t(\sigma_1+\sigma_3)\right),
		\end{equation}
		corresponding to the choice of the positive sign between $\sigma_1$ and $\sigma_3$, the case of the negative sign will be discussed below. Equating this expression with \eqref{eq:tildeC0} yields the system
		\beq
		\begin{cases}
			(\lambda_1(t)+\lambda_2(t))\cos\left(\theta_1(t)+\theta_2(t)\right) = 2e^{l_0 t}\\
			(\lambda_1(t)-\lambda_2(t))\cos\left(\theta_1(t)-\theta_2(t)\right)=2l_1te^{l_0 t}\\
			(\lambda_1(t)-\lambda_2(t))\sin\left(\theta_1(t)-\theta_2(t)\right)=0\\
			(\lambda_1(t)+\lambda_2(t))\sin\left(\theta_1(t)+\theta_2(t)\right)=2l_1te^{l_0 t}.
		\end{cases}
		\eeq
		We have that $\lambda_1(t)=\lambda_2(t)$ for all $t\ge 0$ if and only if $l_1=0$ and, in this case, $\widetilde{\mathcal C}_0(t)=e^{l_0t}\sigma_0$. Instead, if $l_1\neq 0$, the third equation implies  $\theta_1(t)=\theta_2(t)=:\theta(t)$ by continuity and the initial condition on $\theta_i(t)$. Taking the ratio of the last and the first equation
		gives
		\begin{equation}
			\theta(t)=\frac 1 2 \arctan(l_1t).
		\end{equation}
		So, $\cos (2\theta(t))=1/\sqrt{1+(l_1t)^2}$ and by the first equation we obtain
		\beq\label{eq:l1plusl2}
		\lambda_1(t)+\lambda_2(t)=2e^{l_0t}\sqrt{1+(l_1t)^2}.
		\eeq
		Moreover, 
		\beq\label{eq:l1timesl2}
		\lambda_1(t)\lambda_2(t)=\det (D(t))=\det (\widetilde{\mathcal C}_0(t))=\det (e^{l_0t}\exp\left[(l_1\sigma_1+l_3\sigma_3)t\right])=e^{2l_0t},
		\eeq 
		where the last equality follows from the fact that $\sigma_1$ and $\sigma_3$ are traceless. \eqref{eq:l1plusl2} and \eqref{eq:l1timesl2} imply
		\beq\label{eq:lambdaqnull}
		\lambda_1(t)=e^{l_0t}\left(\sqrt{1+(l_1t)^2}+ l_1t\right), \quad \lambda_2(t)=e^{l_0t}\left(\sqrt{1+(l_1t)^2}- l_1t\right).
		\eeq
		Let us now show that the necessary condition \eqref{eq:necessary}, namely $\dot\lambda_1(0)\le 0$, is also sufficient for complete positivity.
		From \eqref{eq:l1assumption} we have $l_1\ge 0$, hence $\lambda_2(t)\le \lambda_1(t)$ for all $t\ge 0$, and it is therefore enough to prove that \eqref{eq:necessary} implies $\lambda_1(t)\le 1$ for all $t\ge 0$. Since $\lambda_1(0)=1$, $\dot\lambda_1(0)\le 0$, and
		$\lambda_1(t)\to 0$ as $t\to +\infty$, it is sufficient to verify that $\dot\lambda_1(t)$ does not vanish for any $t>0$. We have
		\beq
		\begin{split}
			\dot\lambda_1(t) &= e^{l_0t}\left[l_0\left(\sqrt{(l_1t)^2+1}+l_1t\right) 
			+ l_1\left(\frac{l_1t}{\sqrt{(l_1t)^2+1}}+1\right)\right] \\
			& = e^{l_0t}\left[
			l_0{\sqrt{(l_1t)^2+1}}\left(1+\frac{l_1t}{\sqrt{(l_1t)^2+1}}\right)+l_1\left(\frac{l_1t}{\sqrt{(l_1t)^2+1}}+1\right)
			\right]\\
			& =e^{l_0t}\left[\left(l_0{\sqrt{(l_1t)^2+1}}+l_1\right)\left(\frac{l_1t}{\sqrt{(l_1t)^2+1}}+1\right)
			\right].
		\end{split}
		\eeq
		The derivative $\dot\lambda_1(t)$ vanishes if and only if $l_1=-l_0{\sqrt{(l_1t)^2+1}}$, but from \eqref{eq:l0assumption} we know that $l_0\le 0$, so $\dot\lambda_1(t)=0$ if and only if $l_1\geq -l_0$, with equality only at $t=0$. However, this violates \eqref{eq:necessary} and so $\dot\lambda_1(t)\neq 0$ for all $t>0$.
		
		The case of the negative sign between $\sigma_1$ and $\sigma_3$ simply amounts to exchanging $\lambda_1(t)$ with $\lambda_2(t)$ and reversing the sign of the angle function $\theta(t)$, without affecting the validity of the argument.}
\end{case}

\begin{case}[$q<0$]
	\normalfont{Suppose now $q=l_3^2-l_1^2<0$, then
		\begin{equation}\label{eq:qneg}
			\widetilde{\mathcal C}_0(t)=e^{l_0t}\left(\cosh(t\sqrt{-q})\,\sigma_0 + \frac{\sinh(t\sqrt{-q})}{\sqrt{-q}}(l_1\sigma_1+l_3\sigma_3)\right).
		\end{equation}
		Note that if $l_3=0$, then $q=-l_1^2<0$ and $\sqrt{-q}=l_1$.
		In this case the generator reduces to $l_1\sigma_1$, and therefore \eqref{eq:expgenerator} gives
		\beq
		\widetilde{\mathcal C}_0(t)
		=
		\begin{pmatrix}
			e^{(l_0+l_1)t} & 0\\
			0 & e^{(l_0-l_1)t}
		\end{pmatrix},
		\eeq
		so, if $q<0$ and $l_3= 0$, $\lambda_1(t)=e^{(l_0+l_1)t}$ and $\lambda_2(t)=e^{(l_0-l_1)t}$ are both $\le 1$ for all $t\ge 0$ if $l_1+l_0\le 0$.
		
		If $l_3\neq 0$, equating \eqref{eq:qneg} with \eqref{eq:tildeC0} yields the system
		\beq
		\begin{cases}
			\dfrac{\lambda_1(t)+\lambda_2(t)}{2}\cos\left(\theta_1(t)+\theta_2(t)\right) = e^{l_0 t}\cosh(t\sqrt{-q})\\
			\dfrac{\lambda_1(t)-\lambda_2(t)}{2}\cos\left(\theta_1(t)-\theta_2(t)\right)=\dfrac{l_1\sinh(t\sqrt{-q})}{\sqrt{-q}}e^{l_0 t}\\
			\dfrac{\lambda_1(t)-\lambda_2(t)}{2}\sin\left(\theta_1(t)-\theta_2(t)\right)=0\\
			\dfrac{\lambda_1(t)+\lambda_2(t)}{2}\sin\left(\theta_1(t)+\theta_2(t)\right)=\dfrac{l_3\sinh(t\sqrt{-q})}{\sqrt{-q}}e^{l_0 t}.
		\end{cases}
		\eeq
		We set
		\beq\label{eq:settingfncts}
		S(t)=\frac{\lambda_1(t)+\lambda_2(t)}{2},\;
		D(t)=\frac{\lambda_1(t)-\lambda_2(t)}{2},\;
		A(t)=\theta_1(t)+\theta_2(t),\;
		B(t)=\theta_1(t)-\theta_2(t).
		\eeq
		Notice that the case $D(t)=0$, i.e. $\lambda_1(t)=\lambda_2(t)$ for all $t\ge 0$ would force the second equation of the system to give
		$l_1=0$, which contradicts $q=l_3^2-l_1^2<0$, therefore the system can be analysed under the assumption $\lambda_1(t)\neq\lambda_2(t)$. 
		We have $\theta_1(t)=\theta_2(t)=:\theta(t)$, with
		\begin{equation}\label{eq:thetaqneg}
			\tan(2\theta(t))=\frac{l_3}{\sqrt{-q}}\tanh(t\sqrt{-q}).
		\end{equation} 
		Then, the third equation implies $\sin B(t)=0$, that is $\cos B(t)=\pm 1$. The choice of the plus sign is equivalent to the condition $\lambda_1(t)\ge\lambda_2(t)$ for all $t\ge 0$. Without loss of generality, assuming this choice we get
		\beq
		D(t)=\frac{l_1\sinh(t\sqrt{-q})}{\sqrt{-q}}e^{l_0t}.
		\eeq
		Moreover, by squaring and adding the first and fourth equations, and using $-q = l_1^2 - l_3^2$ and $\cosh^2 x = 1 + \sinh^2 x$, we obtain, after taking the square root,
		\beq
		S(t)=\frac{e^{l_0t}}{\sqrt{-q}}
		\sqrt{l_1^2\sinh^2(t\sqrt{-q})-q}.
		\eeq
		Finally, by adding and subtracting $S(t)$ and $D(t)$ we get
		\beq\label{eq:lambda1qneg}
		\lambda_1(t) = \frac{e^{l_0 t}}{\sqrt{-q}}
		\left[\, \sqrt{\sinh^2\bigl(t\sqrt{-q}\bigr)l_1^2 - q}
		+ \sinh\bigl(t\sqrt{-q}\bigr)l_1 \,\right],
		\eeq
		\beq\label{eq:lambda2qneg}
		\lambda_2(t) = \frac{e^{l_0 t}}{\sqrt{-q}}
		\left[\, \sqrt{\sinh^2\bigl(t\sqrt{-q}\bigr)l_1^2 - q}
		- \sinh\bigl(t\sqrt{-q}\bigr)l_1 \,\right].
		\eeq
		We now proceed as we have done in the case $q=0$. The derivative $\dot\lambda_1(t)$ in this case vanishes if and only if the function $\dot f(t)+l_0f(t)$ vanishes, where
		\begin{gather}
			f(t) = \sqrt{\sinh^2(t\sqrt{-q})(l_1)^2 - q} + \sinh(t\sqrt{-q})\,l_1,\\
			\dot f(t) = \frac{(l_1)^2 \sqrt{-q}\cosh(t\sqrt{-q})\sinh(t\sqrt{-q})}{\sqrt{\sinh^2(t\sqrt{-q})(l_1)^2 - q}} + \sqrt{-q}\cosh(t\sqrt{-q})\,l_1.
		\end{gather}
		By setting
		\beq
		\alpha(t)=\sqrt{\sinh^2(t\sqrt{-q})(l_1)^2 - q}, \quad
		\beta(t) = \sinh(t\sqrt{-q}), \quad 
		\gamma(t) = \cosh(t\sqrt{-q}),
		\eeq
		we have
		\beq
		\begin{split}
			\dot f(t)+l_0f(t) & = \frac{l_1^2}{\alpha(t)}\sqrt{-q}\gamma(t)\beta(t)+l_1\sqrt{-q}\gamma(t) + l_0\left(\alpha(t)+l_1\beta(t)\right)\\
			& = l_1\sqrt{-q}\gamma(t)\left(1+\frac{l_1\beta(t)}{\alpha(t)}\right)+l_0\alpha(t)\left(1+\frac{l_1\beta(t)}{\alpha(t)}\right)\\
			& = \left(1+\frac{l_1\beta(t)}{\alpha(t)}\right)\left(l_1\sqrt{-q}\gamma(t)+l_0\alpha(t)\right).
		\end{split}
		\eeq 
		Since
		\begin{equation}
			1+\frac{l_1\beta(t)}{\alpha(t)}=1+\frac{l_1\sinh(t\sqrt{-q})}{\sqrt{\sinh^2(t\sqrt{-q})(l_1)^2 - q}}\neq 0,
		\end{equation}
		the derivative $\dot\lambda_1(t)$ vanishes if and only if $l_1\sqrt{-q}\gamma(t)+l_0\alpha(t)$ vanishes. Writing
		\begin{equation}
			l_1\sqrt{-q}\gamma(t)+l_0\alpha(t)=l_1\sqrt{-q}\sqrt{\sinh^2(t\sqrt{-q})+1}+l_0\sqrt{-q}\sqrt{\frac{\sinh^2(t\sqrt{-q})(l_1)^2}{-q} +1},
		\end{equation}
		if $t\geq 0$ verifies $\dot\lambda_1(t)=0$, then
		\begin{equation}
			l_1\sqrt{\sinh^2(t\sqrt{-q})+1}=-l_0\sqrt{\frac{\sinh^2(t\sqrt{-q})(l_1)^2}{-q} +1}.
		\end{equation}
		But $q=l_3^2-l_1^2<0$ implies $-l_1^2/q\geq 1$ and consequently $l_1\geq -l_0$, with equality only if $t=0$, since $l_3\neq 0$.}
\end{case}

\begin{case}[$q>0$]
	\normalfont{Finally, suppose $q=l_3^2-l_1^2>0$, then
		\begin{equation}
			\widetilde{\mathcal C}_0(t)=e^{l_0t}\left(\cos(t\sqrt{q})\,\sigma_0 + \frac{\sin(t\sqrt{q})}{\sqrt{q}}(l_1\sigma_1+l_3\sigma_3)\right).
		\end{equation}
		Equating this expression with \eqref{eq:tildeC0} yields the system
		\beq
		\begin{cases}
			\dfrac{\lambda_1(t)+\lambda_2(t)}{2}\cos\left(\theta_1(t)+\theta_2(t)\right) = \cos(t\sqrt q)e^{l_0 t}\\
			\dfrac{\lambda_1(t)-\lambda_2(t)}{2}\cos\left(\theta_1(t)-\theta_2(t)\right)=\dfrac{l_1\sin(t\sqrt q)}{\sqrt q}e^{l_0 t}\\
			\dfrac{\lambda_1(t)-\lambda_2(t)}{2}\sin\left(\theta_1(t)-\theta_2(t)\right)=0\\
			\dfrac{\lambda_1(t)+\lambda_2(t)}{2}\sin\left(\theta_1(t)+\theta_2(t)\right)=\dfrac{l_3\sin(t\sqrt q)}{\sqrt q}e^{l_0 t}.
		\end{cases}
		\eeq
		From the second and the third equation, it follows that $\lambda_1(t)=\lambda_2(t)$ for all $t\ge 0$ if and only if $l_1=0$. In this case $q=l_3^2>0$, $\sqrt{q}=|l_3|$, and from \eqref{eq:caseschannel} we get 
		\beq
		\begin{split}
			\widetilde{\mathcal C}_0(t)
			& =
			e^{l_0t}\left(\cos(|l_3|t)\,\sigma_0+\sin(|l_3|t)\,\sigma_3\right)\\
			&=\begin{pmatrix}
				e^{l_0t} & 0 \\ 0 & e^{l_0t}
			\end{pmatrix}
			R(|l_3|t) = \begin{pmatrix}
				e^{l_0t} & 0 \\ 0 & e^{l_0t}
			\end{pmatrix}
			R(|\dot\theta_1(0)+\dot\theta_2(0)|t),
		\end{split}
		\eeq
		so, if $q>0$ and $l_1= 0$, $\lambda_1(t)=\lambda_2(t)=e^{l_0t} \le 1$ for all $t\ge 0$.
		
		We assume now that $l_1\neq 0$, so $\lambda_1(t)\neq \lambda_2(t)$ and $\theta_1(t)=\theta_2(t)=:\theta(t)$, with
		\begin{equation}\label{eq:thetaqpos}
			\tan(2\theta(t))=\frac{l_3}{\sqrt{q}}\tan(t\sqrt{q}).
		\end{equation}
		Solving the system as in the case $q<0$, we obtain
		\beq\label{eq:lambdasqpositive1}
		\lambda_1(t) = \frac{e^{l_0 t}}{\sqrt{q}}\left[\sqrt{\sin^2(t\sqrt{q})(l_1)^2+q} 
		+ \sin(t\sqrt{q})\,l_1\right], 
		\eeq 
		\beq\label{eq:lambdasqpositive2}
		\lambda_2(t) = \frac{e^{l_0 t}}{\sqrt{q}}\left[\sqrt{\sin^2(t\sqrt{q})(l_1)^2+q} 
		- \sin(t\sqrt{q})\,l_1\right].
		\eeq
		Contrary to the previous situations, it is no longer true that $\lambda_1(t)\geq\lambda_2(t)$ for all $t\geq 0$, see Figure \ref{figeigen_eq} for an illustration. It follows that, in this case, we must consider both scaling functions. 
		\begin{figure}[ht]
			\centering
			\includegraphics[scale=0.25]{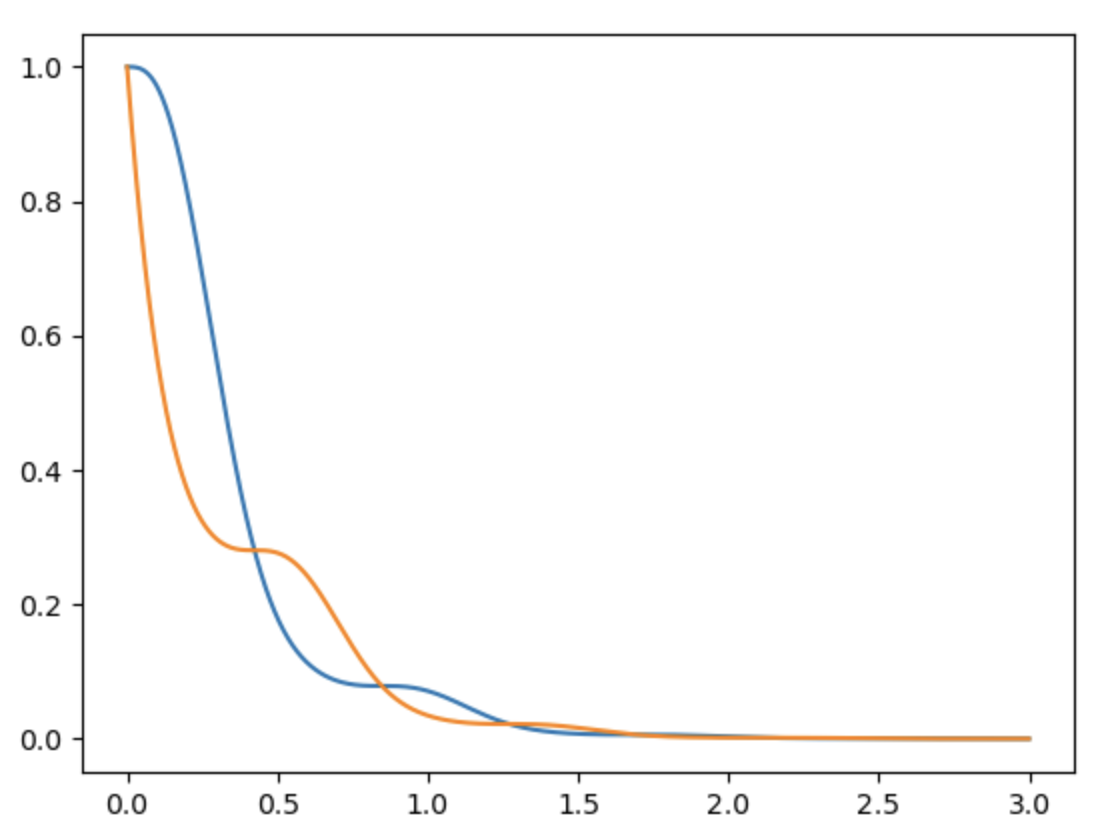}
			\caption{Graphs of $\lambda_1(t)$ and $\lambda_2(t)$ in \eqref{eq:lambdasqpositive1} and \eqref{eq:lambdasqpositive2}, with $l_0=-3$, $l_1=3$, and $l_3=8$.}
			\label{figeigen_eq}
		\end{figure} 
		
		We have that $\dot\lambda_1(t)$ vanishes if and only if
		\begin{equation}
			\left(1+\frac{l_1\beta(t)}{\alpha(t)}\right)\left(l_1\sqrt{q}\gamma(t)+l_0\alpha(t)\right)=0.
		\end{equation}
		where
		\beq
		\alpha(t)=\sqrt{\sin^2(t\sqrt{q})l_1^2+q}, \quad
		\beta(t) = \sin(t\sqrt{q}), \quad
		\gamma(t)=\cos(t\sqrt{q}).
		\eeq
		Since
		\begin{equation}
			1+\frac{l_1\beta(t)}{\alpha(t)}=1+\frac{l_1\sin(t\sqrt{q})}{\sqrt{\sin^2(t\sqrt{q})l_1^2+q}}\neq 0,
		\end{equation}
		$\dot\lambda_1(t)$ vanishes if and only if
		\begin{equation}
			l_1=-l_0\frac{1}{\cos(t\sqrt q)}\sqrt{\frac{\sin^2(t\sqrt{q})l_1^2}{q}+1}.
		\end{equation}
		If $t\ge 0$ is such that $\dot\lambda_1(t)=0$, then $\cos(t\sqrt{q})>0$ and $l_1\geq -l_0$, with equality if and only if $t\sqrt q=0$ modulo $2\pi$. But at these points 
		\begin{equation}
			\lambda_1\left(\frac{2k\pi}{\sqrt q}\right)= \exp\left(l_0\frac{2k\pi}{\sqrt q}\right)\leq 1,
		\end{equation}
		with equality if and only if $k=0$, i.e. $t=0$. In the same way $\dot\lambda_2(t)$ vanishes if and only if\begin{equation}
			\left(1-\frac{l_1\beta(t)}{\alpha(t)}\right)\left(l_0\alpha(t)-l_1\sqrt{q}\gamma(t)\right)=0,
		\end{equation}
		that is, if and only if
		\begin{equation}
			l_1=l_0\frac{1}{\cos(t\sqrt q)}\sqrt{\frac{\sin^2(t\sqrt{q})l_1^2}{q}+1}.
		\end{equation}
		If $t\ge 0$ is such that $\dot\lambda_2(t)=0$, then $\cos(t\sqrt{q})<0$ and $l_1\geq -l_0$, with equality if and only if $t\sqrt q=\pi$ modulo $2\pi$. But at these points
		\begin{equation}
			\lambda_2\left(\frac{\pi + 2k\pi}{\sqrt q}\right)= \exp\left(l_0\frac{\pi+2k\pi}{\sqrt q}\right)<1.
	\end{equation}}
\end{case}
\noindent With the analysis of the three cases $q=0,q<0,$ and $q>0$, Proposition \ref{theo1.1} is fully proven. 
\qed 

\medskip

\noindent The following corollary is an immediate consequence of the results obtained in the proof of Proposition \ref{theo1.1}.

\begin{corollary}\label{eq:corRDR}
	The unital Markovian rebit channel $\widetilde{\mathcal C}_0(t)$ can be written as
	\beq\label{eq:RDR}
	\widetilde{\mathcal C}_0(t) = R(t)D(t)R(t),
	\eeq 
	where the rotation matrix $R(t)=R(\theta(t))$ and $D(t)=\operatorname{diag}(\lambda_1(t),\lambda_2(t))$ are as follows.
	\bi
	\setlength{\itemsep}{0.8em}
	\item $q=0$ and $l_1=0:$  $R(t)=I_2$ and $D(t)=\operatorname{diag}(e^{l_0t},e^{l_0t})$.
	\item $q=0$ and $l_1\neq 0:$  $\theta(t)=\pm \arctan(l_1t)/2$ and $\lambda_{1,2}(t)=e^{l_0t}\left(\sqrt{1+(l_1t)^2}\pm l_1t\right)$.
	\item $q<0$ and $l_3=0:$ $R(t)=I_2$ and $D(t)=\operatorname{diag}\left(e^{(l_0+l_1)t},\,e^{(l_0-l_1)t}\right)$.
	\item $q<0$ and $l_3\neq 0:$ $\theta(t)$ satisfies \eqref{eq:thetaqneg} and $\lambda_{1}(t)$, $\lambda_{2}(t)$ are given by \eqref{eq:lambda1qneg} and \eqref{eq:lambda2qneg}.
	\item $q>0$ and $l_1=0:$ $\theta(t)=|l_3|t/2$ and $D(t)=\operatorname{diag}(e^{l_0t},e^{l_0t})$.
	\item $q>0$ and $l_1\neq 0:$ $\theta(t)$ satisfies \eqref{eq:thetaqpos} and $\lambda_{1}(t)$, $\lambda_{2}(t)$ are given by \eqref{eq:lambdasqpositive1} and \eqref{eq:lambdasqpositive2}.
	\ei 
\end{corollary}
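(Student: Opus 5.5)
The plan is to read the corollary off the case analysis in the proof of Proposition \ref{theo1.1}. The one new ingredient is to show that the two rotations coincide, $R_1(t)=R_2(t)=:R(t)$, so that the decomposition \eqref{eq:gen} of $\widetilde{\mathcal C}_0(t)$ becomes $R(t)D(t)R(t)$.

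The starting point is the explicit expansion \eqref{eq:tildeC0} of $\widetilde{\mathcal C}_0(t)=R_1(t)D(t)R_2(t)$ in the basis $\sigma_0,\sigma_1,\sigma_2,\sigma_3$. By \eqref{eq:dotC0} and the power series \eqref{eq:powerseries}, the matrix $\widetilde{\mathcal C}_0(t)$ lies in the span of $\sigma_0,\sigma_1,\sigma_3$. Hence its $\sigma_2$-coefficient vanishes:
\[
(\lambda_1(t)-\lambda_2(t))\sin(\theta_1(t)-\theta_2(t))=0 .
\]
This gives the case split.

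\emph{Degenerate subcases.} Suppose $\lambda_1\equiv\lambda_2$. This happens exactly when $l_1=0$, so it covers $q=0,l_1=0$ and $q>0,l_1=0$.
\begin{itemize}
\item If $q=0$, then $l_3=0$ as well. The explicit formula gives $e^{l_0t}\sigma_0$, so $R(t)=I_2$.
\item If $q>0$, we showed $\widetilde{\mathcal C}_0(t)=e^{l_0t}R(|l_3|t)$. Since a homothety commutes with rotations, this equals $R(|l_3|t/2)\,e^{l_0t}\sigma_0\,R(|l_3|t/2)$, i.e. $\theta(t)=|l_3|t/2$.
\end{itemize}
The subcase $q<0,l_3=0$ is direct: the generator is $l_0\sigma_0+l_1\sigma_1$, which is diagonal, so $R(t)=I_2$ and $D(t)=\mathrm{diag}(e^{(l_0+l_1)t},e^{(l_0-l_1)t})$.

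\emph{Nondegenerate subcases.} Here $l_1\neq0$ and $\lambda_1(t)\neq\lambda_2(t)$, so $\sin(\theta_1-\theta_2)=0$. By continuity and $\theta_1(0)=\theta_2(0)=0$ we get $\theta_1(t)=\theta_2(t)$, and hence $R_1(t)=R_2(t)$. The angle $\theta(t)$ and the scaling functions are then exactly those already obtained in the proof:
\begin{itemize}
\item $q=0$: $\theta(t)=\arctan(l_1t)/2$ and \eqref{eq:lambdaqnull}, with the $\pm$ sign tracking the sign choice in front of $\sigma_3$ (and the swap of $\lambda_1,\lambda_2$ discussed there);
\item $q<0$: \eqref{eq:thetaqneg}, \eqref{eq:lambda1qneg} and \eqref{eq:lambda2qneg};
\item $q>0$: \eqref{eq:thetaqpos}, \eqref{eq:lambdasqpositive1} and \eqref{eq:lambdasqpositive2}.
\end{itemize}

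\emph{Main obstacle.} The delicate step is the continuity argument for $\theta_1=\theta_2$ in the nondegenerate subcases. The difference $\lambda_1-\lambda_2$ could vanish at isolated times; this happens for $q>0$, where $\sin(t\sqrt q)=0$. At such times the relation $\sin(\theta_1-\theta_2)=0$ carries no information, so it does not by itself pin down $\theta_1-\theta_2$ there.

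The way around this is to exploit the freedom in the decomposition \eqref{eq:gen}. At an instant where $\lambda_1=\lambda_2$, $D(t)$ is scalar, so only the sum $\theta_1+\theta_2$ is determined. One may therefore choose $\theta_1=\theta_2$ at those instants, which yields a continuous choice of $\theta(t)$ for all $t\ge0$.
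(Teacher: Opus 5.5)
Your proposal takes the same route as the paper. The paper presents the corollary as an immediate read-off from the case analysis in the proof of Proposition~\ref{theo1.1}: the vanishing $\sigma_2$-coefficient together with continuity gives $\theta_1=\theta_2$, and the angle and scaling functions are the ones already computed there. A few small corrections follow.

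\emph{The zeros of $\sin(t\sqrt q)$.} Your ``main obstacle'' is not really an obstacle, and the pointwise re-choice is not what resolves it. Since $R_1,R_2$ are assumed continuously differentiable, $\theta_1-\theta_2$ is continuous on $[0,\infty)$. It lies in $\pi\mathbb{Z}$ off the discrete zero set of $\sin(t\sqrt q)$, hence by continuity it lies in $\pi\mathbb{Z}$ everywhere, and therefore it is identically $0$. This continuity argument is what excludes $\theta_1-\theta_2=\pi$ on an interval after a zero. That alternative is equivalent to swapping $\lambda_1$ and $\lambda_2$, and it is pointwise consistent with the explicit form of $\widetilde{\mathcal C}_0(t)$. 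Redefining $\theta_i$ only at the zeros would not rule it out. The paper glosses over these zeros entirely, so your attention to them is welcome, but the fix is continuity.

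\emph{The case $q>0$, $l_1=0$.} Here $\frac{\sin(t\sqrt q)}{\sqrt q}\,l_3=\sin(l_3t)$, so in fact $\widetilde{\mathcal C}_0(t)=e^{l_0t}R(l_3t)$ and $\theta(t)=l_3t/2$. This is consistent with $\dot\theta(0)=l_3/2$. The $|l_3|$ you took over from the paper is correct only when $l_3\ge 0$.

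\emph{The case $q=0$.} The negative sign in front of $\sigma_3$ does not swap $\lambda_1$ and $\lambda_2$. The $\sigma_1$-coefficient is $l_1te^{l_0t}$ for both signs, so only $\theta$ changes sign. That is exactly what the corollary's formula states, so your parenthetical about a swap should be dropped.
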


\begin{remark}
	In light of Proposition \ref{theo1.1} and Corollary \ref{eq:corRDR}, the geometric interpretation of the coefficients $l_0,l_1,l_3$ is the following: $l_0$ and $l_1$ control the isotropic and anisotropic contraction of the Bloch disk, respectively, while $l_3$ accounts for rotational effects.
\end{remark}

\subsection{Classification of Markovian rebit channels: the general case}\label{sec:general}

In the previous subsection, we classified unital Markovian rebit channels, as stated in Proposition \ref{theo1.1}. Here, we extend that result to include the non-unital case, thereby obtaining the most general classification of Markovian rebit channels.

\begin{proposition}
	\label{coro1.1}
	Let $\mathcal C(t)=R_1(t)\mathcal C_{\mathbb A}(t)R_2(t)$ be a Markovian rebit channel with 
	\begin{equation}
		{\mathcal C}_{\mathbb A}(t)(\vs)=D(t)\vs+{\bf w}(t), \quad \vs\in \cal D,
	\end{equation}
	where $D(t)=\operatorname{diag}(\lambda_1(t),\lambda_2(t))$ and ${\bf w(t)}\in \cal D$ for all $t\ge 0$.
	Assume that the scaling functions $\lambda_1(t)$ and $\lambda_2(t)$ and ${\bf w}(t)$ satisfy the CP conditions in eq. \eqref{eq:channelconstraints} for all $t\geq 0$ and that $R_1$, $R_2$, and $D$ are continuously differentiable. 
	Then, there exist a unital Markovian rebit channel 
	
	\begin{equation}
		\mathcal C_0(t)=R_2^{-1}(0)R(t)D(t)R(t)R_2(0)
	\end{equation} 
	belonging to the classification appearing in Proposition \ref{theo1.1} and Corollary \ref{eq:corRDR}, and ${\bf W}(t)\in\cal D$ for all $t\ge 0$, with ${\bf W}(0)=\bf 0$, such that the action of $\mathcal C(t)$ can be written as
	\begin{equation}
		{\mathcal C}(t)(\vs)={\mathcal C}_0(t)(\vs)+{\bf W}(t), \quad \vs\in \cal D,
	\end{equation}
	with
	\begin{equation}
		R_1(t)=R_2^{-1}(0)R(t), \quad R_2(t)=R(t)R_2(0),
	\end{equation} 
	and, whenever $\dot{\mathcal C}_0(0)$ is invertible, 
	\begin{equation}
		{\bf W}(t)=R_2^{-1}(0)R(t){\bf w}(t)= \left[\dot{\mathcal C}_0(0)\right]^{-1}\left(\mathcal C_0(t)(\dot{\bf W}(0))-\dot{\bf W}(0)\right),
	\end{equation}
	otherwise, ${\bf W}(t)=\bf 0$ for all $t\ge 0$.
\end{proposition}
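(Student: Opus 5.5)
The plan is to decompose the affine semigroup into its linear and translational parts, apply Proposition \ref{theo1.1} to the linear part, and then solve a Cauchy-type functional equation for the translation.

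\textbf{Linear and translational parts.} Write $\mathcal C(t)(\vs)=L(t)\vs+\mathbf W(t)$ with
\[
L(t)=R_1(t)D(t)R_2(t), \qquad \mathbf W(t)=R_1(t)\mathbf w(t).
\]
Composing two such affine maps and using the semigroup property \eqref{eq:rebitmarkov}, then comparing linear and constant parts (take $\vs=\mathbf 0$, then difference two values of $\vs$), gives
\[
L(t_1+t_2)=L(t_1)L(t_2), \qquad
\mathbf W(t_1+t_2)=L(t_1)\mathbf W(t_2)+\mathbf W(t_1).
\]
We also get $L(0)=I_2$ and $\mathbf W(0)=\mathbf 0$. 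Thus $L(t)$ is a continuous semigroup of the form $R_1DR_2$, with $\lambda_i$ satisfying the first three CP conditions. Hence $L(t)=\mathcal C_0(t)$ is a unital Markovian rebit channel.

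\textbf{Applying the unital classification.} Proposition \ref{theo1.1} and Corollary \ref{eq:corRDR} give
\[
\mathcal C_0(t)=R_2^{-1}(0)R(t)D(t)R(t)R_2(0).
\]
Here $D(t)$ is the same scaling matrix, since singular values are intrinsic. Matching the singular value decompositions identifies
\[
R_1(t)=R_2^{-1}(0)R(t), \qquad R_2(t)=R(t)R_2(0).
\]
The ambiguity (signs, or swapping when $\lambda_1=\lambda_2$) is absorbed using continuity and the initial conditions, as in the proof of Proposition \ref{theo1.1}. It follows that $\mathbf W(t)=R_2^{-1}(0)R(t)\mathbf w(t)$, and $\mathbf W(t)\in\mathcal D$ because it is the image of the centre $\mathbf 0\in\mathcal D$ under a channel.

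\textbf{Solving for the shift.} Differentiate the cocycle identity in $t_2$ at $t_2=0$. The dependence on $t_2$ enters only through $\mathbf W(t_2)$, whose differentiability follows from that of $R_1$ and $\mathbf w$; if needed, one first derives it by a standard integral averaging argument. This yields
\[
\dot{\mathbf W}(t)=\mathcal C_0(t)\dot{\mathbf W}(0), \qquad
\mathbf W(t)=\int_0^t e^{s\dot{\mathcal C}_0(0)}\dot{\mathbf W}(0)\,ds .
\]
When $\dot{\mathcal C}_0(0)$ is invertible, this integral equals
\[
\left[\dot{\mathcal C}_0(0)\right]^{-1}\left(\mathcal C_0(t)\dot{\mathbf W}(0)-\dot{\mathbf W}(0)\right),
\]
which is the stated formula.

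\textbf{Main obstacle: the singular generator.} The delicate step is the non-invertible case $\det\dot{\mathcal C}_0(0)=l_0^2-l_1^2-l_3^2=0$, where we must show $\mathbf W\equiv\mathbf 0$. Since $l_0\le 0$, $l_0+l_1\le 0$ and $l_1\ge0$, singularity forces $l_1=-l_0$ and $l_3=0$. Then $\lambda_1\equiv1$ and $\dot{\mathcal C}_0(0)$ is diagonal up to conjugation, with kernel along the first eigendirection. If $\dot{\mathbf W}(0)$ has a nonzero component $c$ along that kernel, the integral grows like $ct$ in that direction, which contradicts $\mathbf W(t)\in\mathcal D$. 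Alternatively, one invokes the requirement that $\max\{\lambda_1,\lambda_2\}=1$ forces $\mathbf w=\mathbf 0$, and $\lambda_1\equiv1$ gives this directly. So $\mathbf W\equiv\mathbf 0$ there, except in the degenerate case $l_0=l_1=0$, where $\mathcal C_0$ is the identity and $\mathbf W$ must again vanish by the same CP requirement. I expect this case, and cleanly handling the identification of $R_1,R_2$ including sign ambiguities, to be the main technical care points.
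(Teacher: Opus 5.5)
Your argument is essentially the paper's. You split $\mathcal C(t)$ into two parts: the unital part $\mathcal C_0(t)=R_1(t)D(t)R_2(t)$, which the unital classification puts in the form $R_2^{-1}(0)R(t)D(t)R(t)R_2(0)$, and the shift $\mathbf W(t)=R_1(t)\mathbf w(t)=\mathcal C(t)(\mathbf 0)$, which you then solve from the generator. The only difference is technical. You obtain $\mathbf W(t)=\int_0^t e^{s\dot{\mathcal C}_0(0)}\dot{\mathbf W}(0)\,ds$ by differentiating the cocycle identity. The paper instead exponentiates the $3\times 3$ block generator in homogeneous coordinates, and the lower-left block of that exponential is exactly your integral.

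The singular case, however, contains an error you need to fix. Since $\dot{\widetilde{\mathcal C}}_0(0)=\pmat{l_0+l_1 & -l_3\\ l_3 & l_0-l_1}$, its determinant is $l_0^2-l_1^2+l_3^2=\dot\lambda_1(0)\dot\lambda_2(0)+l_3^2$, not $l_0^2-l_1^2-l_3^2$.
\begin{itemize}
\item With your sign, the deduction $l_1=-l_0$, $l_3=0$ is false. For example, $l_0=-5$, $l_1=3$, $l_3=4$ satisfies all your constraints and makes your expression vanish.
\item With the correct sign, the deduction holds. The terms $(l_0+l_1)(l_0-l_1)\ge 0$ and $l_3^2\ge 0$ must then vanish separately, which is precisely the paper's argument.
\end{itemize}

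Of your two ways of concluding $\mathbf W\equiv\mathbf 0$, only the second works. The linear-growth argument only excludes a component of $\dot{\mathbf W}(0)$ along $\ker\dot{\mathcal C}_0(0)$. When $l_0<0$, a small component along the other eigendirection yields a bounded, nonzero $\mathbf W(t)$ that stays inside $\mathcal D$. What rules this out is the stronger requirement that $\mathcal C(t)$ map $\mathcal D$ into itself while $\lambda_1\equiv 1$. That is the CP clause ``$\max\{\lambda_1,\lambda_2\}=1\Rightarrow \mathbf w=\mathbf 0$'', which is how the paper concludes.

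A minor point: when $\lambda_1\equiv\lambda_2$, the factorisation $R_1DR_2$ only fixes $\theta_1+\theta_2$. So $R_1(t)=R_2^{-1}(0)R(t)$ is a choice of parametrisation, not something continuity and the initial conditions force. This is harmless, because $\mathbf W(t)=\mathcal C(t)(\mathbf 0)$ does not depend on it, and the paper makes the same choice.
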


\proof By eq. \eqref{eq:gen}, the general action of a Markovian rebit channel is 
\begin{equation}
	{\mathcal C}(t)(\vs)={\mathcal C}_0(t)(\vs)+R_1(t){\bf w}(t).
\end{equation}
By \eqref{eq:CfromtildeC} and Corollary \ref{eq:corRDR}, the unital Markovian channel $\mathcal C_0(t)=R_1(t)D(t)R_2(t)$ is given by
\begin{equation}
	\mathcal C_0(t)=R_2^{-1}(0)\widetilde{\mathcal C}_0(t)R_2(0)=R_2^{-1}(0)R(t)D(t)R(t)R_2(0).
\end{equation}
In homogeneous coordinates, the affine action of $\mathcal C(t)$ is represented by the $3\times3$ matrix 
\begin{equation}
	C(t) = \pmat{1 & \mathbf{0}^T\\
		{\bf W}(t) & \mathcal C_0(t)} , 
\end{equation}
where ${\bf W}(t)=R_2^{-1}(0)R(t){\bf w}(t)$ is such that
\begin{equation}
	\mathcal C(t)(\vs)=\pmat{1 & \mathbf{0}^T\\
		{\bf W}(t) & \mathcal C_0(t)}\pmat{1\\ \vs}.
\end{equation}
Since ${\bf w}(0)=\mathbf{0}$, the infinitesimal generator $\dot{\mathcal C}(0)$ of the channel $\mathcal C(t)$ is then given by the matrix
\begin{equation}
	\dot{\mathcal C}(0) = \pmat{0 & \mathbf{0}^T\\
		\dot{{\bf W}}(0) & \dot{\mathcal C}_0(0)} , 
\end{equation}
with $\dot{{\bf W}}(0)=R_2^{-1}(0)R(0){\dot{\bf w}}(0)$. It can be checked the powers of $\dot{\mathcal C}(0)$ can be written as 
\begin{equation}
	\left[\dot{\mathcal C}(0)\right]^k = \pmat{0 & \mathbf{0}^T\\
		[\dot{\mathcal C}_0(0)]^{k-1}(\dot{{\bf W}}(0)) & [\dot{\mathcal C}_0(0)]^{k}} , 
\end{equation}
thus the exponential series gives
\begin{equation}
	{\bf W}(t)=\left[\dot{\mathcal C}_0(0)\right]^{-1}\left(\mathcal C_0(t)-I_2\right)\dot{\bf W}(0),
\end{equation}
provided that $\dot{\mathcal C}_0(0)$ is invertible. By \eqref{eq:dotC0}, we have 
\beq
\det\left(\dot{\mathcal C}_0(0)\right)=\det\left(\dot{\widetilde{\mathcal C}_0}(0)\right)=0
\eeq 
if and only if $\dot\lambda_1(0)\dot\lambda_2(0)+l_3^2=0$. If the CP conditions hold, then $\dot\lambda_1(0)\dot\lambda_2(0)\ge 0$ and so the determinant is null if and only if  $\dot\lambda_1(0)\dot\lambda_2(0)=l_3=0$. But then that at least one of $\lambda_1(t)$ and $\lambda_2(t)$ is 1 for all $t\ge 0$, so the CP conditions imply ${\bf W}(t)=\bf 0$ for all $t\ge 0$.
\qed 

\begin{remark} Markovian channels whose affine components are themselves Markovian, classified in Proposition \ref{prop1.1}, arise as particular cases of the previous result when $q=0$ and $l_1=0$, $q<0$ and $l_3=0$, or $q>0$ and $l_1=0$. 
\end{remark}

\section{Markovian rebit channels and GKSL master equation}\label{sec:GKSL}

As already noted in Remark \ref{rem1.1}, deriving explicit constraints on ${\bf w}(t)$ and on the scaling functions $\lambda_i(t)$, $i=1,2$, that guarantee complete positivity is not straightforward, even in the simplified setting where the affine components are Markovian.

Here we show that it is possible to obtain a simple characterisation of complete positivity entirely determined by the parameters $\dot\lambda_i(0)$ and $\dot w_i(0)$, $i=1,2$. To this end, we must establish a connection between the analysis of the previous section and the GKSL master equation.

We start quoting the following theorem, see e.g.  \cite{Breuer:2002, Gorini:76, Lindblad:76}, and then we adapt it to rebit systems.

\begin{theorem}\label{th:GKSL} 
	Let $\mathcal H$ be a complex Hilbert space of dimension $d\geq 2$. We denote $\mathcal T(\mathcal H)$ and $\mathcal B(\mathcal H)$ the sets of trace-class and bounded operators on $\mathcal H$. Let $\mathcal L:\mathcal T(\mathcal H)\longrightarrow \mathcal T(\mathcal H)$ be a linear operator. Then, $\mathcal L$ is the infinitesimal generator of a completely positive dynamical semigroup on $\mathcal T(\mathcal H)$ if and only if there exist:
	\bi 
	\item $H\in \mathcal B(\mathcal H)$ {\it with $H^\dag=H$ and Tr$(H)=0$}
	\item $F_k\in \mathcal B(\mathcal H)$, {\it with Tr$(F_k)=0$ and Tr$(F_kF_\ell^\dag)=\delta_{k\ell}$ for all $k,\ell=1,\dots,d^2-1$}
	\item $\mathcal M=\left(\alpha_{k\ell}\right)_{k,\ell=1}^{d^2-1}${\it, a self-adjoint positive semi-definite matrix,}
	\ei 
	\it such that for all $X$ in $\mathcal T(\mathcal H)$, we have:
	\begin{equation}\label{eq:Linddef}
		\mathcal L(X)=-i\left[H,X\right]+\sum_{k,\ell=1}^{d^2-1}\alpha_{k,\ell}\left(2F_kXF_\ell^\dag-\{F_\ell^\dag F_k,X\} \right).
	\end{equation}
\end{theorem}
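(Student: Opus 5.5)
Since this is the classical Gorini--Kossakowski--Sudarshan--Lindblad theorem, quoted here only in finite dimension $d$, I would reproduce the standard argument. It has two directions. Throughout I fix the orthonormal basis $F_0=I/\sqrt d, F_1,\dots,F_{d^2-1}$ of $\mathcal B(\mathcal H)$ for the Hilbert--Schmidt inner product, where the $F_k$ with $k\ge 1$ are traceless. In finite dimension every linear map $\Phi$ on $\mathcal T(\mathcal H)=\mathcal B(\mathcal H)$ can be written uniquely as $\Phi(X)=\sum_{k,\ell=0}^{d^2-1}c_{k\ell}F_kXF_\ell^\dag$. Moreover, $\Phi$ is completely positive if and only if the coefficient matrix $(c_{k\ell})$ is positive semi-definite. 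This is Choi's theorem, since $(c_{k\ell})$ is unitarily equivalent to the Choi matrix of $\Phi$. This lemma is the main tool in both directions.

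\emph{Sufficiency.} Suppose $\mathcal L$ has the form \eqref{eq:Linddef}. First, $\mathrm{Tr}\,\mathcal L(X)=0$ by cyclicity of the trace, because $\mathrm{Tr}(F_kXF_\ell^\dag)=\mathrm{Tr}(F_\ell^\dag F_kX)$ and $\mathrm{Tr}[H,X]=0$. Hence $e^{t\mathcal L}$ is trace preserving. For complete positivity I would split $\mathcal L=\mathcal L_0+\Phi$ with
\begin{equation*}
\Phi(X)=2\sum_{k,\ell}\alpha_{k\ell}F_kXF_\ell^\dag, \qquad \mathcal L_0(X)=KX+XK^\dag, \qquad K=-iH-\sum_{k,\ell}\alpha_{k\ell}F_\ell^\dag F_k .
\end{equation*}
Here $\Phi$ is CP by the lemma, since $\mathcal M\ge 0$. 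Also $e^{t\mathcal L_0}(X)=e^{tK}Xe^{tK^\dag}$ is CP, as it has a single Kraus operator, and $e^{t\Phi}=\sum_n t^n\Phi^n/n!$ is CP as a convergent series of CP maps with nonnegative coefficients. The Lie--Trotter formula $e^{t\mathcal L}=\lim_n\big(e^{t\mathcal L_0/n}e^{t\Phi/n}\big)^n$ then expresses $e^{t\mathcal L}$ as a limit of compositions of CP maps. The CP cone is closed in finite dimension, so $e^{t\mathcal L}$ is CP.

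\emph{Necessity.} Let $T_t=e^{t\mathcal L}$ be a CPTP semigroup and expand $T_t(X)=\sum c_{k\ell}(t)F_kXF_\ell^\dag$ with $(c_{k\ell}(t))\ge 0$. Since $T_0=\mathrm{id}$, we have $c_{k\ell}(0)=d\,\delta_{k0}\delta_{\ell0}$, so the coefficients are differentiable at $0$ and $\mathcal L(X)=\sum a_{k\ell}F_kXF_\ell^\dag$ with $a_{k\ell}=\dot c_{k\ell}(0)$. For $k,\ell\ge1$ we have $a_{k\ell}=\lim_{t\to 0}c_{k\ell}(t)/t$. The submatrix $(c_{k\ell}(t))_{k,\ell\ge1}$ is positive semi-definite as a principal submatrix of a positive semi-definite matrix, so the limit $\mathcal M:=\tfrac12(a_{k\ell})_{k,\ell\ge1}$ is positive semi-definite. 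Collecting the terms that involve $F_0$, I would set $G=\tfrac{1}{\sqrt d}\sum_{k\ge 0}a_{k0}F_k-\tfrac{a_{00}}{2d}I$, chosen so that the $F_0$-dependent part equals $GX+XG^\dag$; this uses that $a$ is Hermitian, inherited from the Hermiticity of $c(t)$. Next I would write $G=-iH+A$ with $H=(G^\dag-G)/(2i)$ and $A=(G+G^\dag)/2$ self-adjoint, subtracting $\mathrm{Tr}(H)I/d$ from $H$ so that $\mathrm{Tr}H=0$; this does not change the commutator. Finally, trace preservation $\mathrm{Tr}\,\mathcal L(X)=0$ for all $X$ gives $2A+\sum_{k,\ell\ge1}a_{k\ell}F_\ell^\dag F_k=0$, which turns $AX+XA$ into the anticommutator term of \eqref{eq:Linddef}.

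I expect the main obstacle to be this last bookkeeping step in the necessity direction. One has to check that the $F_0$ terms collapse exactly to $GX+XG^\dag$, that the self-adjoint part of $G$ is forced by the trace condition, and that no positivity is needed for the $a_{k0}$ and $a_{00}$ entries, because they only enter $H$ and $A$. The rest is routine once the Choi-matrix lemma is in place.
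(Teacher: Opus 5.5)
Your proof is correct. The paper gives no proof of its own and simply quotes the theorem from Gorini--Kossakowski--Sudarshan, Lindblad and Breuer--Petruccione; your argument is the standard proof found there: Choi-type positivity of the coefficient matrix in a basis with $F_0=I/\sqrt d$, differentiation at $t=0$ combined with trace preservation for necessity, and a Lie--Trotter splitting for sufficiency.

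You were right to read ``completely positive dynamical semigroup'' as trace preserving, since necessity fails otherwise: $\mathcal L=-\mathrm{id}$ generates the CP semigroup $e^{-t}\mathrm{id}$ but cannot have the stated form. Also, your final shift to make $\mathrm{Tr}(H)=0$ is unnecessary, because $\mathrm{Tr}\,G=a_{00}/2$ is already real.
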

$\mathcal{L}$ is referred to as the \textit{Lindbladian}, $M$ and $F_k$ are called \textit{Lindblad matrix} and \textit{operators}, respectively, and $\{A,B\}:=AB+BA$ is the anti-commutator. 

The differential equation
\beq
\dot X(t)=\mathcal L(X(t)), \quad t\ge 0,
\eeq 
is the GSKL master equation. 

If $X$ is a density matrix $\rho$ of a quantum system with Hilbert space $\cal H$ and Hamiltonian $H$, and if all the coefficients $\alpha_{k\ell}$ vanish, then the GKSL master equation reduces to the Liouville-von Neumann equation: $\dot{\rho}(t) = -i[H, \rho(t)]$. For this reason, the first term of the Lindbladian in \eqref{eq:Linddef} is associated with \textit{unitary evolution}, while the second term, called \textit{dissipator}, takes into account the interaction between the system and its environment.

For a single rebit: $\mathcal H=\R^2$ and $X(t)=\rho_{\bf s}(t)=(\sigma_0+s_1(t) \sigma_1 + s_2(t) \sigma_2)/2$, so the presence of the imaginary unit in \eqref{eq:Linddef} lacks a fundamental justification and must be replaced in order to obtain a Linbladian term responsible for the \textit{orthogonal evolution} of the rebit. 

Since the only meaningful orthogonal dynamics of $\rho_{\bf s}$ is given by a rotation of the Bloch vector $\vs(t)=(s_1(t),s_2(t))$, we replace the term $-i[H,X]$ in \eqref{eq:Linddef} with $[\omega\sigma_3,\rho_{\bf s}]$, with $\omega\in\R$. 

Indeed, if one retains only this contribution, the GKSL equation reduces to
\begin{equation}
	\dot\rho_{\bf s}(t)=\frac{1}{2}(\dot s_1(t)\sigma_1+\dot s_2(t)\sigma_2)=[\omega\sigma_3,\rho_{\bf s}(t)]=\omega(-s_2(t)\sigma_1+s_1(t)\sigma_2),
\end{equation}
which yields the differential system 
\begin{equation}
	\begin{cases}
		\dot s_1(t) = -\omega s_2(t)\\
		\dot s_2(t) = \omega s_1(t),
	\end{cases}
\end{equation} 
solved by
\begin{equation}
	\begin{cases}
		s_1(t) = s_1(0)\cos(\omega t) -s_2(0)\sin(\omega t) \\
		s_2(t) = s_2(0)\cos(\omega t) + s_1(0)\sin(\omega t),
	\end{cases}
\end{equation} 
thereby showing that the bracket $[\omega\sigma_3,\rho_{\bf s}]$ generates an orthogonal evolution. 

A further, independent justification for this choice follows from the observation that $\rho_{\bf s}$ belongs to the Jordan algebra $\mathcal H(2,\R)$ of symmetric $2\times 2$ real matrices, endowed with the Jordan product defined by half the anti-commutator, see e.g. \cite{Baez:12} for further information. 
As proven in \cite{Yao:2019}, the only derivations, namely Leibniz-type endomorphisms, of $\mathcal H(2,\R)$ act precisely through the bracket $[\omega\sigma_3,\rho_{\bf s}]$.

Let us now analyse the dissipator of the Lindbladian for a rebit. A natural choice for the Lindblad operators is the following family of traceless matrices:  
\beq
F_k = \frac{1}{\sqrt{2}}\sigma_k, \quad k=1,2,3,
\eeq  
in fact, $\text{Tr}(F_kF_\ell^T)=\delta_{k\ell}$. With direct computations, we obtain that the diagonal terms of $\cal L$, i.e. 
\begin{equation}
	{\mathcal L}_k(\rho_{\bf s}) := 2F_k \rho_{\bf s} F_k^T-\{F_k^T F_k,\rho_{\bf s}\}, \quad k=1,2,3,
\end{equation}
are given by
\begin{equation}\label{eq:Ldiag}
	\mathcal L_1(\rho_{\bf s}) = -s_2\sigma_2, \quad
	\mathcal L_2(\rho_{\bf s}) = -s_1\sigma_1, \quad
	\mathcal L_3(\rho_{\bf s}) = -(s_1\sigma_1+s_2\sigma_2).
\end{equation}
Similarly, the off-diagonal terms of $\cal L$
\beq
\widetilde{\mathcal L}_{k,\ell}(\rho_{\bf s}) := {\mathcal L}_{k,\ell}(\rho_{\bf s})+{\mathcal L}_{\ell,k}(\rho_{\bf s}),
\eeq
with
\beq
{\mathcal L}_{k,\ell}(\rho_{\bf s})=2F_k \rho_{\bf s} F_\ell^T-\{F_k^T F_\ell,\rho_{\bf s}\}, \quad 
{\mathcal L}_{\ell,k}(\rho_{\bf s})=2F_\ell \rho_{\bf s} F_k^T-\{F_\ell^T F_k,\rho_{\bf s}\},
\eeq
$k,\ell=1,2,3$, $k\neq \ell$, are given by  
\beq\label{eq:Loffdiag}
\widetilde{\mathcal L}_{1,2}(\rho_{\bf s})  = s_2\sigma_1+s_1\sigma_2, \quad
\widetilde{\mathcal L}_{1,3}(\rho_{\bf s})= 2\sigma_2, \quad 
\widetilde{\mathcal L}_{2,3}(\rho_{\bf s})= -2\sigma_1.
\eeq

\begin{proposition}
	The CP conditions \eqref{eq:channelconstraints} are satisfied if and only if there
	exist  $\alpha_{i,j}\in \R$, $i,j=1,2,3$, satisfying
	\begin{equation}
		\begin{cases}
			\dot \lambda_1(0) = -2(\alpha_{2,2} + \alpha_{3,3}) \\
			\dot \lambda_2(0) = -2(\alpha_{1,1} + \alpha_{3,3}) \\
			\dot w_1(0)= - 4\alpha_{2,3} \\
			\dot w_2(0)= 4\alpha_{1,3},
		\end{cases}
	\end{equation}
	and such that the matrix
	\begin{equation}
		\mathcal M=\pmat{\alpha_{1,1} & 0 &\alpha_{1,3}\\
			0 & \alpha_{2,2} & \alpha_{2,3}\\
			\alpha_{1,3} & \alpha_{2,3} & \alpha_{3,3}}
	\end{equation}
	is positive semi-definite.
\end{proposition}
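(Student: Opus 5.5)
The plan is to reduce the statement to the GKSL characterisation of Theorem \ref{th:GKSL}, transported to the real setting, and then read off the generator in Bloch coordinates. By Proposition \ref{coro1.1} and the orthogonal conjugation \eqref{eq:deftildeC}, we may assume null initial angles, $R_2(0)=I_2$. Conjugating by a fixed rotation maps $\mathcal M$ to $O\mathcal M O^T$, which preserves positive semi-definiteness, so this reduction is harmless. The generator then acts affinely on Bloch vectors as
\[
\dot\vs=\bigl(l_0\sigma_0+l_1\sigma_1+l_3\sigma_3\bigr)\vs+\dot{\bf w}(0).
\]
Its diagonal entries are $\dot\lambda_1(0)$ and $\dot\lambda_2(0)$, its antisymmetric part is $l_3\sigma_3$, and it has \emph{no} symmetric off-diagonal ($\sigma_2$) component.

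First, I would establish the real GKSL theorem. A real linear map on $\mathcal H(2,\R)$ is CP for all $t$ (in the sense of \eqref{eq:channelconstraints}) if and only if its complexification is CP, since the Choi matrix is real symmetric. Theorem \ref{th:GKSL} then applies to the complexified generator, with the $F_k=\sigma_k/\sqrt2$ chosen real. Because $\mathcal L$ is real, $\mathcal L=\tfrac12(\mathcal L+\overline{\mathcal L})$. Averaging the GKSL forms of $\mathcal L$ and $\overline{\mathcal L}$ replaces $\mathcal M$ by $\operatorname{Re}\mathcal M$, which is still positive semi-definite and now real symmetric. It also replaces $-i[H,\cdot]$ by its real part, which must be of the form $[\omega\sigma_3,\cdot]$. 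Any contribution of $\operatorname{Im}\mathcal M$ that does not cancel in this averaging should reduce to a commutator with $\sigma_3$ and be absorbed into $\omega$. I expect this bookkeeping to be the main obstacle. I would handle it by computing explicitly the terms $i\beta_{k\ell}(\mathcal L_{k,\ell}-\mathcal L_{\ell,k})$ for the three antisymmetric pairs and checking that they are either non-real (hence cancel) or proportional to $[\sigma_3,\rho_{\bf s}]$. The converse direction, that a real PSD $\mathcal M$ yields a CP semigroup, is immediate from Theorem \ref{th:GKSL}.

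Second, I would compute $\mathcal L(\rho_{\bf s})$ using \eqref{eq:Ldiag}, \eqref{eq:Loffdiag} and the rotation term:
\[
\mathcal L(\rho_{\bf s})=\omega(-s_2\sigma_1+s_1\sigma_2)-(\alpha_{2,2}+\alpha_{3,3})s_1\sigma_1-(\alpha_{1,1}+\alpha_{3,3})s_2\sigma_2+\alpha_{1,2}(s_2\sigma_1+s_1\sigma_2)+2\alpha_{1,3}\sigma_2-2\alpha_{2,3}\sigma_1 .
\]
Setting this equal to $\tfrac12(\dot s_1\sigma_1+\dot s_2\sigma_2)$ gives an affine ODE whose linear part has diagonal entries $-2(\alpha_{2,2}+\alpha_{3,3})$ and $-2(\alpha_{1,1}+\alpha_{3,3})$. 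Its symmetric off-diagonal part is $2\alpha_{1,2}$, its antisymmetric part is $2\omega\sigma_3$, and its translation is $(-4\alpha_{2,3},\,4\alpha_{1,3})$.

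Finally, I would match this with the channel generator above. The absence of $\sigma_2$ in \eqref{eq:dotC0} forces $\alpha_{1,2}=0$, which gives the zero pattern of $\mathcal M$ in the statement. The rotation fixes $\omega=l_3/2$, which is unconstrained. The remaining coefficients give exactly the four stated identities for $\dot\lambda_i(0)$ and $\dot w_i(0)$. For the equivalence, if such a PSD $\mathcal M$ exists, the GKSL generator reproduces $\dot{\mathcal C}(0)$, so by \eqref{eq:matexpC} the semigroup is CP and \eqref{eq:channelconstraints} holds for all $t$. Conversely, CP for all $t$ gives a real PSD $\mathcal M$ by the first step, and the matching forces the stated relations.
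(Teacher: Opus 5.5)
Your proposal is essentially the paper's proof. Both reduce to $\widetilde{\mathcal C}(t)$ with null initial angles, write its generator as the affine map \eqref{eq:diffC0}, expand the rebit Lindbladian with \eqref{eq:Ldiag} and \eqref{eq:Loffdiag}, and match the two affine systems coefficient by coefficient; keeping $\alpha_{1,2}$ and showing it must vanish is a small improvement, since the paper simply drops $\widetilde{\mathcal L}_{1,2}$ from the start.

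Your first step goes beyond the paper, which takes the real form of Theorem~\ref{th:GKSL} for granted. If you keep it, note that a map known only on $\mathcal H(2,\R)$ must first be extended to $\sigma_3$ before it can be complexified, and that extension is the single free parameter among $\alpha_{1,1},\alpha_{2,2},\alpha_{3,3}$ (since $\mathcal L(\sigma_3)=-2(\alpha_{1,1}+\alpha_{2,2})\sigma_3$), which is why the statement says ``there exist''. Also, with real $F_k$ the averaging gives exactly the dissipator of $\operatorname{Re}\mathcal M$, so the leftover $\operatorname{Im}\mathcal M$ terms you expected to be the main obstacle never arise.
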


\proof The key to proving the proposition lies in comparing the dynamics of a rebit density matrix $\rho_\s$ under the action of a Markovian channel $\mathcal C(t)$ with its evolution as a solution of the GKSL equation with Lindbladian $\mathcal L$.

Since $\mathcal C(t)$ is completely positive if and only if $\widetilde{\mathcal C}(t)=R_2(0)\mathcal C(t)R_2^{-1}(0)$ is, it is enough to determine the complete positivity conditions for the semigroup
\begin{equation}
	\widetilde{\mathcal C}(t)(\vs)={\widetilde{\mathcal C}}_0(t)\vs+\widetilde{\bf w}(t),
\end{equation}
with $\widetilde{\bf w}(t)=R(t){\bf w}(t).$ We have
\begin{equation}
	\dot{\widetilde{\mathcal C}}(0)(\vs)=\dot{\widetilde{\mathcal C}}_0(0)\vs+R(0)\dot{\bf w}(0),
\end{equation}
which, using eq. \eqref{eq:dotC0}, can be rewritten as
\begin{equation}
	\dot{\widetilde{\mathcal C}}(0)(\vs)=\pmat{l_0+l_1 & -l_3\\
		l_3 & l_0-l_1}\vs+R(0)\dot{\bf w}(0).
\end{equation}
By Corollary \ref{eq:corRDR}, $\theta_1(t)=\theta_2(t)=\theta(t)$, $\dot \theta(0)=\omega$ and $\theta(0)=0$, so $l_3=2\dot\theta(0)$ and
\begin{equation}\label{eq:diffC0}
	\dot{\widetilde{\mathcal C}}(0)(\vs)=\pmat{\dot\lambda_1(0) & -2\dot\theta(0)\\
		2\dot\theta(0) & \dot\lambda_2(0)}\vs+\dot{\bf w}(0).
\end{equation}
If $\rho_\s(t)$ evolves under the action of the channel $\widetilde{\mathcal C}(t)$, then 
\begin{equation}
	\rho_{\bf s}(t)=\frac 1 2(\sigma_0+s_1(t) \sigma_1 + s_2(t) \sigma_2)=\widetilde{\mathcal C}(t)\rho_{\bf s}(0)=e^{\dot{\widetilde{\mathcal C}}(0)t}\rho_{\bf s},
\end{equation}
which gives
\begin{equation}
	\dot{\rho}_{\bf s}(t)=\frac{\dot s_1(t) \sigma_1 + \dot s_2(t) \sigma_2}{2}=\dot{\widetilde{\mathcal C}}(0)\rho_{\bf s}(t)=\dot{\widetilde{\mathcal C}}(0)\frac{(\sigma_0+s_1(t) \sigma_1 + s_2(t) \sigma_2)}{2},
\end{equation}
that, thanks to eq. \eqref{eq:diffC0}, can be written as the following differential system
\begin{equation}
	\begin{cases}
		\dot s_1(t) = \dot\lambda_1(0)s_1(t) -2\dot\theta(0)s_2(t) +\dot w_1(0) \\
		\dot s_2(t) = 2\dot\theta(0)s_1(t)+\dot\lambda_2(0)s_2(t)+\dot w_2(0),
	\end{cases}
\end{equation} 
with $s_1(0)=s_1$, and $s_2(0)=s_2$. 

Now we describe the evolution of $\rho_{\bf s}$ via the GKSL master equation 
\begin{equation}
	\dot{\rho}_{\bf s}(t)=\frac{\dot s_1(t) \sigma_1 + \dot s_2(t) \sigma_2}{2}=\mathcal{L}(\rho_{\bf s}(t)),
\end{equation}
with Lindbladian
\beq\label{Lind}
\begin{split}
	\mathcal{L}(\rho_{\bf s})& =[\dot\theta(0)\sigma_3,\rho_{\bf s}]+
	\sum_{k=1}^{3}\alpha_{k,k}\mathcal{L}_k(\rho_{\bf s}) + \alpha_{1,3}\widetilde{\mathcal{L}}_{1,3}(\rho_{\bf s}) + \alpha_{2,3}\widetilde{\mathcal{L}}_{2,3}(\rho_{\bf s})\\
	&=-\left((\alpha_{2,2} + \alpha_{3,3})s_1(t) + 2\alpha_{2,3}+\dot\theta(0)s_2(t)\right)\sigma_1 \\
	& \quad -\left((\alpha_{1,1} + \alpha_{3,3})s_2(t) - 2\alpha_{1,3}-\dot\theta(0)s_1(t)\right)\sigma_2.
\end{split}
\eeq
Using eqs.  \eqref{eq:Ldiag} and  \eqref{eq:Loffdiag}, we get the differential system
\begin{equation}
	\begin{cases}
		\dot s_1 = -2(\alpha_{2,2} + \alpha_{3,3})s_1(t) - 4\alpha_{2,3} -2\dot\theta(0)s_2(t) \\
		\dot s_2 = -2(\alpha_{1,1} + \alpha_{3,3})s_2(t) + 4\alpha_{1,3} + 2\dot\theta(0)s_1(t),
	\end{cases}
\end{equation}
with $s_1(0)=s_1$, and $s_2(0)=s_2$. Comparing the two differential systems that we have obtained, we have that $\widetilde{\mathcal C}(t)$ is a Markovian rebit channel if and only if the statement of the theorem holds.  \qed

\section{Markovian rebit channels and colour perception}\label{sec:applications}

Besides its intrinsic theoretical interest, the analysis of Markovian rebit channels presented in the previous sections has a relevance in the study of the effect of illuminant changes on human colour perception, which is the main motivation behind our theoretical analysis.

The mathematical model presented in \cite{BerthierProvenzi:2022PRS,Berthier:22SIAM}, describes perceived colours as the result of quantum measurements performed by quantum effects on rebit states. If $\rho_{\bf s}$ is a density matrix, then we can write
\begin{equation}\label{eq:oppo}
	\rho_{\bf s}=\frac{1}{2}\left[ \sigma_0+ r\cos\vartheta \left(\rho_\s{(1,0)}-\rho_\s{(-1,0)}\right)+r\sin\vartheta \left(\rho_\s{(0,1)}-\rho_\s{(0,-1)}\right) \right],
\end{equation}
with $\vs=(s_1,s_2)=(r\cos\vartheta, r\sin\vartheta)$, $r\in [0,1]$, $\theta \in [0,2\pi)$. 

The state $\rho_0:=\sigma_0/2$ is the state of maximal von Neumann entropy which is natural to assume as the \textit{achromatic state}. Moreover, since orthogonal pure states are represented by antipodal points on the boundary of the Bloch disk, $(\rho_\s{(1,0)},\rho_\s{(-1,0)}$) and $(\rho_\s{(0,1)},\rho_\s{(0,-1)})$ are two pairs of mutually exclusive states. 

These considerations show that the representation of $\rho_{\bf s}$ by eq. \eqref{eq:oppo} is the exact quantum analogue of Hering's representation of colour \cite{Berthier:2020}: the {\em chromatic state} $\rho_{\bf s}$ decomposes as a sum of an {\em achromatic state} $\rho_0$ with {\em two chromatic opponency axes} $\rho_\s{(1,0)}$ vs. $\rho_\s{(-1,0)}$ and $\rho_\s{(0,1)}$ vs. $\rho_\s{(0,-1)}$, with respective {\em degrees of opponency} given by $r\cos\vartheta$ and $r\sin\vartheta$. 

The pure chromatic states $\rho_\s{(1,0)}$, $\rho_\s{(-1,0)}$, $\rho_\s{(0,1)}$, and $\rho_\s{(0,-1)}$ are {\em the four unique hues}, usually denominated as red, green, yellow and blue. 

Since $\sigma_1=\rho_\s{(1,0)}-\rho_\s{(-1,0)}$ and $\sigma_2=\rho_\s{(0,1)}-\rho_\s{(0,-1)}$, the chromatic opponency is encoded by the two Pauli matrices $\sigma_1$ and $\sigma_2$.

A quantum effect of the rebit is represented by a matrix
\begin{equation}
	\eta_{\bf e}=\pmat{e_0+e_1 & e_2\\ e_2 & e_0-e_1}=2e_0\rho{(e_1/e_0,e_2/e_0)},
\end{equation}
with $0< e_0\leq 1$, and such that the effect vector $\ve=(e_1/e_0,e_2/e_0)$ belongs to the Bloch disk. Given an effect $\eta_{\bf e}$ and a density matrix $\rho_{\bf s}$, one can perform a L\"uders measurement, see e.g. \cite{Kraus:83,Busch:97}, that gives rise to the following post-measurement generalised state 
\begin{equation}
	\label{eq:Lud}
	\psi_{\bf e}(\s):=\eta_{\bf e}^{1/2}\rho_{\bf s}\eta_{\bf e}^{1/2}=\langle{\bf e}\rangle_{\bf s} \frac{\psi_{\bf e}(\s)}{\langle{\bf e}\rangle_{\bf s}}=:\langle{\bf e}\rangle_{\bf s} \varphi_{\bf e}(\s),
\end{equation}
where $\varphi_{\bf e}(\s)$ is the post-measurement chromatic state, and 
\begin{equation}
	0\leq\langle{\bf e}\rangle_{\bf s}=e_0(1+{\ve}\cdot{\vs})\leq 1 
\end{equation}
is the expectation value of the measurement.

In the colour model, chromatic states are associated with the preparation of a visual scene and effects represent observers. Thus, {\em the generalized state $\psi_{\bf e}(\s)$ is the colour perceived by the observer ${\bf e}$ from a visual scene prepared in the chromatic state $\s$}.  It is shown in \cite{BerthierProvenzi:2022PRS} that  
\begin{equation}
	{\bf v}_{\varphi_{\ee}(\s)}=\ve \gplus \vs,
\end{equation}
where ${\ve}{\gplus}{\vs}$ is the relativistic sum of the two vectors: if $\ve=\bf 0$ then $\ve\gplus \vs := \vs$, otherwise  
\begin{equation}
	{\ve}\gplus{\vs}:=\frac{1}{1+{\ve}\cdot{\vs}}\left[{\ve}+\frac{{\vs}}{\gamma_{\ee}}+\frac{\gamma_{\bf e}({\ve}\cdot{\vs}){\ve}}{1+\gamma_{\bf e}}\right],
\end{equation}
with
\begin{equation}
	\gamma_{\bf e}=\frac{1}{\sqrt{1-\Vert{\ve}\Vert^2}}.
\end{equation}
This new description of perceived colours makes it possible to give well-founded mathematical definitions of their perceptual attributes: brightness, lightness, saturation, colourfulness, chroma, and hue, see \cite{Berthier:22SIAM} for details. It also enables the efficient implementation of white-balance algorithms by means of Lorentz boosts, see \cite{Berthier:24SPM}.

A crucial concept in quantum information is the \textit{relative entropy} between two quantum states ${\bf s}$ and ${\bf t}$. It is defined as 
\beq 
R(\rho_{{\bf s}}||\rho_{{\bf t}}):={\rm Tr}\left( \rho_{{\bf s}}\log_2\rho_{{\bf s}}-\rho_{{\bf s}}\log_2\rho_{{\bf t}}\right),
\eeq 
and it represents an important measure of distinguishability between states. In the specific case of a rebit, the relative entropy has an analytical expression given by \cite{Cortese:02}
\beq
R(\rho_{{\bf s}}||\rho_{{\bf t}})=\frac{1} 2 \log_2(1-r_{\bf s}^2)+\frac{r_{\bf s}} 2 \log_2\left(\frac{1+r_{\bf s}}{1-r_{\bf s}}\right)-\frac1 2 \log_2(1-r_{\bf t}^2) 
-\frac{r_{\bf s}\cos \theta_{{\bf s},{\bf t}}} 2 \log_2\left(\frac{1+r_{\bf t}}{1-r_{\bf t}}\right).
\eeq 
The interplay between quantum relative entropy and quantum channels is encoded by the well-known Data Processing Inequality (DPI), see e.g. \cite{Wilde:2017,Witten:2020},
\beq\label{eq:DPI}
R(\mathcal C(\rho_{\bf s})\,\|\,\mathcal C(\rho_{\bf t}))\le R(\rho_{\bf s}\,\|\,\rho_{\bf t}), \quad \forall \, \s,{\bf t},
\eeq
which implies that channels tend to decrease the distinguishability between quantum states.

In the following subsections, we analyse the role of Markovian rebit channels within the colour perception model outlined above.

\subsection{Illuminant channels and chromatic distortion}

Let us consider a visual scene prepared in such a way that an observer $\ee$ is exposed to the chromatic state $\s$ of a colour stimulus in a room illuminated by an \textit{achromatic} illuminant $\imath_a$. Physically, $\imath_a$ corresponds to a light source with a broadband spectrum. Mathematically, $\imath_a$ is described by a null Bloch vector ${\bf v}_{\imath_a}=\mathbf{0}$.

Empirically, after a sufficiently long exposure, the observer $\ee$ becomes adapted to $\imath_a$, in the sense that the observer's effect vector vanishes. Denoting by $\ee_a$ an observer adapted to an achromatic illuminant, one has ${\bf v}_{\ee_a}=\mathbf{0}$ and therefore ${\bf v}_{\ee_a}\gplus \vs=\vs$. Consequently, $\ee_a$ perceives chromatic states as they intrinsically are, namely $\varphi_{\ee_a}(\s)=\s$.

Suppose now that the scene preparation is altered by replacing $\imath_a$ with a \textit{non-neutral} illuminant $\imath$, with Bloch vector ${\bf v}_\imath\neq \mathbf{0}$. Empirically, the observer's colour perception is reported to undergo a sudden distortion for two reasons: first, chromatic states are no longer perceived in the same manner as before; second, the ability to discriminate between colours is reduced. 

This condition persists for a finite amount of time, until the chromatic adaptation mechanisms of the human visual system intervene to counteract the distortion, progressively steering perception toward a new adapted regime.

Thus, chromatic distortion can be understood as a perceptual phenomenon arising from a change in the overall configuration of the scene (the original chromatic state of the stimulus perceived under a novel illuminant), to which the observer is not yet adapted. Accordingly, at the mathematical level, it is natural to model chromatic distortion by means of a dynamical map acting \textit{on states}, and chromatic adaptation by one acting \textit{on effects}.

Here, we focus on modelling chromatic distortion by means of Markovian rebit channels. The use of quantum channels is motivated by the fact that they capture the two key features highlighted above: they are dynamical maps acting on rebit states and, by virtue of the DPI \eqref{eq:DPI}, they necessarily reduce state distinguishability. The restriction to Markovian dynamics is further justified by the validity, in the present context, of the Born assumptions underlying the GKSL equation, namely the weak coupling between the system and its environment \cite{Auletta:09}. Indeed, the illuminant influences colour perception by the observer, whereas the observer's perceptual state does not affect the illuminant.

In what follows, the four unique hues are assumed to be fixed, thereby defining a reference frame for colour perception. Accordingly, we restrict our attention to Markovian rebit channels $\mathcal C(t)$ of the form \eqref{eq:gen} with $R_1(t)=R_2(t)=I_2$ for all $t\geq 0$. More general scenarios could in principle be envisaged, however their analysis from the viewpoint of colour perception becomes substantially more involved.

We denote by $\s(t)$ the evolution of an initial chromatic state $\s$ under the action of $\mathcal C(t)$, with $\s(0)=\s$. Accordingly,
\beq\label{eq:vst}
\vs(t):=\mathcal C(t)(\vs), \quad t\ge 0,
\eeq
is the Bloch vector of the density matrix $\rho_{{\s}(t)}$, with $\vs(0)=\vs$ and $\rho_{{\s}(0)}=\rho_\s$.

Before chromatic adaptation takes place, the observer remains adapted to the achromatic illuminant, so that $\ee=\ee_a$ and
\beq
{\bf v}_{\ee_a} \gplus \vs(t)=\vs(t), \qquad t\ge 0.
\eeq
Consequently, the corresponding evolution of the post-measurement chromatic state is given by
\beq\label{eq:rhost}
\varphi^{t}_{\ee_a}(\s)=\rho_{\s(t)}, \qquad t\ge 0.
\eeq
A distinctive feature of chromatic distortion induced by an illuminant change is that, during the distortion phase, achromatic colour stimuli are no longer perceived as such, but instead \textit{acquire a chromatic component resembling that of the illuminant}. This entails that the channel $\mathcal C(t)$ must induce a displacement of the centre of the Bloch disk, and therefore cannot be unital.

From eqs. \eqref{eq:Ldiag} and \eqref{eq:Loffdiag} one sees that the diagonal dissipative terms of $\cal L$, i.e. 
$\mathcal L_k$, $k=1,2,3$, are all proportional to the Bloch components $s_1$, $s_2$, and therefore vanish when evaluated on the achromatic state. By contrast, the mixed dissipative terms $\widetilde{\mathcal L}_{1,3}$, $\widetilde{\mathcal L}_{2,3}(\rho_{\bf s})$
are independent of the Bloch vector $\vs$. As a consequence, non-zero coefficients $\alpha_{1,3}$,  $\alpha_{2,3}$ produce a constant contribution to the Lindbladian that  generates a shift of the centre of the Bloch disk. 

Hence, in order for the channel $\mathcal C(t)$ to displace the centre of the Bloch disk, the coefficients $\alpha_{1,3}$ and $\alpha_{2,3}$ cannot both vanish. In the purely dissipative setting considered here, these coefficients can be interpreted as coupling parameters linking the illuminant component encoded by $\sigma_3$ to the chromatic opponent degrees of freedom associated with $\sigma_1$ and $\sigma_2$. Concretely, they quantify how the presence of a non-neutral illuminant induces systematic drifts along the chromatic opponent directions. This effect is not a rotation of the opponency axes, but a direct dissipative coupling that converts the illuminant bias into a chromatic displacement, affecting even the achromatic state.

The above considerations motivate the following definition.

\begin{definition}
	\label{def3.1} An illuminant channel is a Markovian rebit channel generated by the Lindbladian
	\beq\label{Lindill}
	\begin{split}
		\mathcal{L}(\rho_{\bf s}) &=
		\alpha_{1,1}\mathcal{L}_1(\rho_{\bf s}) + \alpha_{2,2}\mathcal{L}_2(\rho_{\bf s}) + \alpha_{3,3}\mathcal{L}_3(\rho_{\bf s}) + \alpha_{1,3}\widetilde{\mathcal{L}}_{1,3}(\rho_{\bf s}) + \alpha_{2,3}\widetilde{\mathcal{L}}_{2,3}(\rho_{\bf s})\\
		& =-\left((\alpha_{2,2} + \alpha_{3,3})s_1 + 2\alpha_{2,3}+\dot\theta(0)s_2\right)\sigma_1\\
		& \quad - \left((\alpha_{1,1} + \alpha_{3,3})s_2 - 2\alpha_{1,3}-\dot\theta(0)s_1\right)\sigma_2.
	\end{split}
	\eeq
\end{definition}

\noindent The GKSL equation corresponding to $\cal L$ leads to the following differential system 
\begin{equation}\label{eq:diffL}
	\begin{cases}
		\dot s_1 = -2(\alpha_{2,2} + \alpha_{3,3})s_1(t) - 4\alpha_{2,3} \\
		\dot s_2 = -2(\alpha_{1,1} + \alpha_{3,3})s_2(t) + 4\alpha_{1,3},
	\end{cases}
\end{equation}
solved by 
\begin{equation}\label{eq:system}
	\begin{cases}
		s_1(t) = s_1(0)e^{-2(\alpha_{2,2}+\alpha_{3,3})t} - \dfrac{2\alpha_{2,3}}{\alpha_{2,2}+\alpha_{3,3}}\left(1 - e^{-2(\alpha_{2,2}+\alpha_{3,3})t}\right) \\[3mm]
		s_2(t) = s_2(0)e^{-2(\alpha_{1,1}+\alpha_{3,3})t} + \dfrac{2\alpha_{1,3}}{\alpha_{1,1}+\alpha_{3,3}}\left(1 - e^{-2(\alpha_{1,1}+\alpha_{3,3})t}\right).
	\end{cases}
\end{equation}
The stationary state $\s_0$ has Bloch vector
\begin{equation}
	{\bf v}_{\s_0}=(s_{0,1},s_{0,2}) = \left(-\frac{2\alpha_{2,3}}{\alpha_{2,2} + \alpha_{3,3}},\frac{2\alpha_{1,3}}{\alpha_{1,1} + \alpha_{3,3}} \right).
\end{equation}
The state ${\bf s}_0$ characterizes the asymptotic behaviour of the purely dissipative dynamics. In the absence of chromatic adaptation, the initial chromatic components $s_i(0)$ are exponentially suppressed and progressively replaced by the fixed offsets determined by the Lindblad coefficients. 


By definition, an illuminant channel is a Markovian rebit channel whose affine component is Markovian, see the system \eqref{eq:system}.  Moreover, the differential system \eqref{eq:diffL} shows that an illuminant channel is unital if and only if $\alpha_{1,3}=\alpha_{2,3}=0$, which is equivalent to the fact that the illuminant $\imath$ is achromatic.

Figure \ref{dynLind} illustrates the dynamics generated by two illuminant channels.

\begin{figure}[htbp] 
	\center
	\includegraphics[scale=0.18]{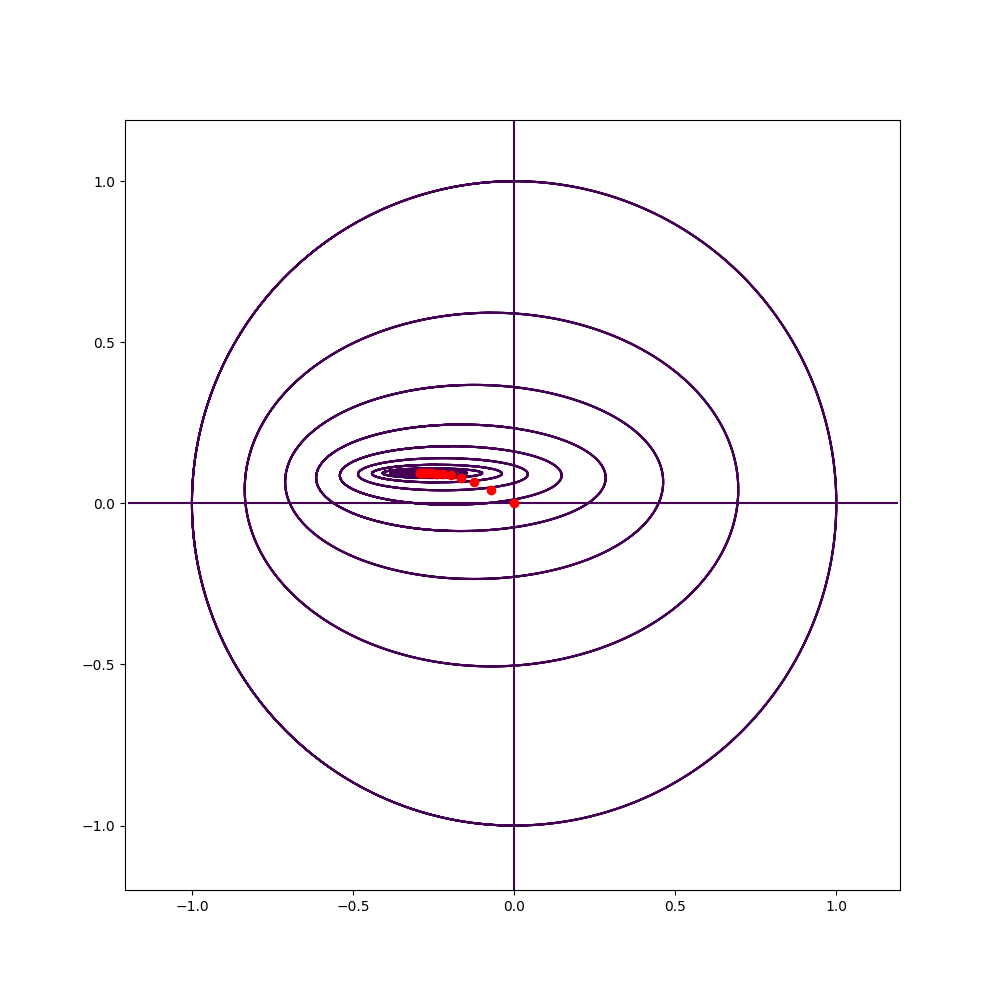}\ \ \ \includegraphics[scale=0.18]{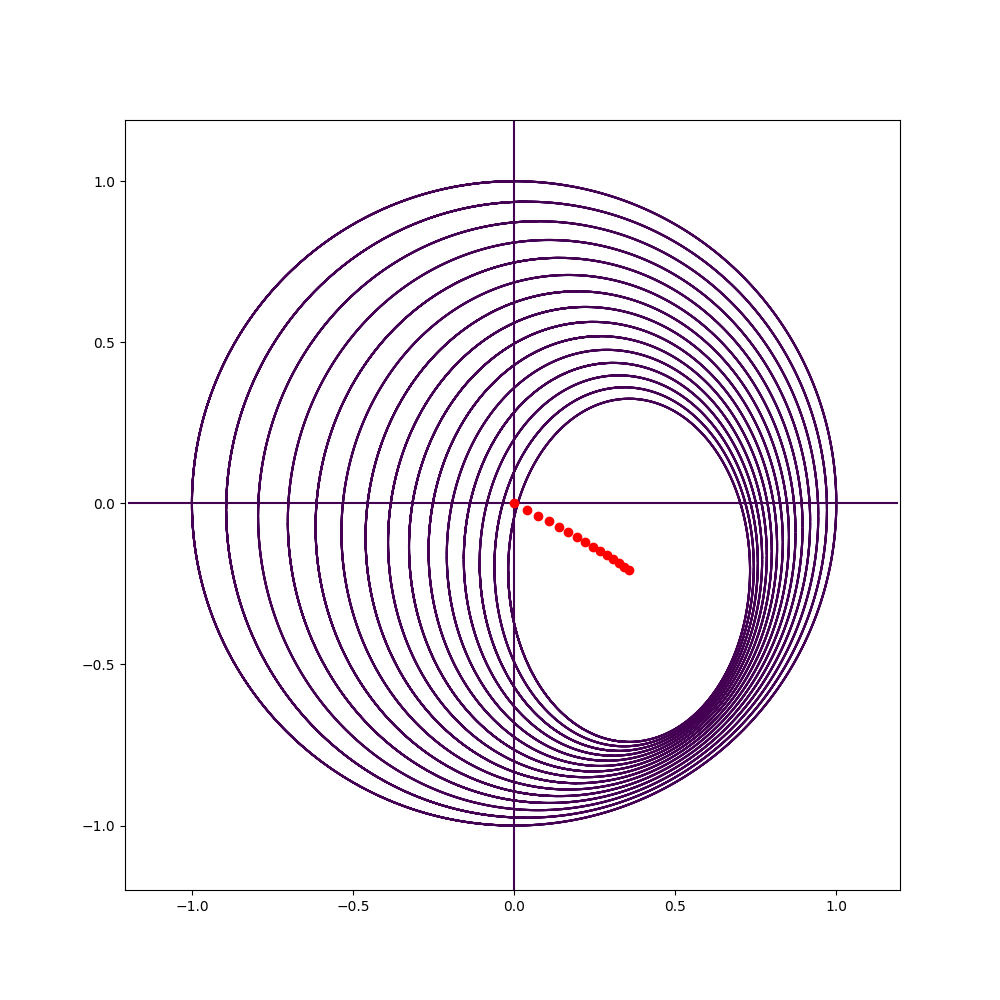} 
	\caption{Dynamics generated by illuminant channels. The time steps are $t = i/10$, $i = 0, \dots, 14$, and the parameters of the Lindblad matrix are as follows: $\alpha_{1,1} = 2$, $\alpha_{2,2} = 1/3$, $\alpha_{3,3} = 1$, $\alpha_{1,3} = 1/7$, and $\alpha_{2,3} = 1/5$ (left), and $\alpha_{1,1} = 1/8$, $\alpha_{2,2} = 1/4$, $\alpha_{3,3} = 1/10$, $\alpha_{1,3} = -1/20$, and $\alpha_{2,3} = -1/10$ (right).} 
	\label{dynLind}
\end{figure}

\subsection{Opponency channels and colour vision deficiencies}

Besides illuminant channels, one may also consider unital Markovian rebit channels whose Lindbladian satisfies $\alpha_{1,3}=\alpha_{2,3}=\alpha_{3,3}=0$, i.e. channels for which $\sigma_3$ does not appear in $\mathcal L$. Since the Pauli matrices encode Hering's chromatic opponency, such channels act exclusively on the chromatic degrees of freedom, modifying the representation of chromatic states in terms of the opponency axes associated with the unique hues. For this reason, these channels may be used to simulate colour perception deficiencies. 

The following definition formalises this notion.

\begin{definition}
	\label{def3.2} An opponency channel is a unital Markovian rebit channel generated by the Lindbladian
	\beq\label{Lindopp}
	\begin{split}
		\mathcal{L}(\rho_{\bf s})&=
		\alpha_{1,1}\mathcal{L}_1(\rho_{\bf s}) + \alpha_{2,2}\mathcal{L}_2(\rho_{\bf s}) + \alpha_{1,2}\widetilde{\mathcal{L}}_{1,2}(\rho_{\bf s})\\
		& =
		(\alpha_{1,2}s_2 -\alpha_{2,2}s_1) \sigma_1+(\alpha_{1,2}s_1 -\alpha_{1,1}s_2)\sigma_2.
	\end{split}
	\eeq
\end{definition}
\noindent An opponency channel that leaves the unique hues fixed is thus given by
\begin{equation}
	\mathcal{L}(\rho_{\bf s})=\alpha_{1,1}\mathcal{L}_1(\rho_{\bf s}) + \alpha_{2,2}\mathcal{L}_2(\rho_{\bf s})= -(\alpha_{2,2}s_1\sigma_1 +\alpha_{1,1}s_2\sigma_2).
\end{equation}
The differential system arising from the GKSL equation with this Lindbladian is
\begin{equation}
	\begin{pmatrix} \dot{s_1} \\ \dot{s_2} \end{pmatrix} =
	\begin{pmatrix} -2\alpha_{2,2} & 2\alpha_{1,2} \\ 2\alpha_{1,2} & -2\alpha_{1,1} \end{pmatrix}
	\begin{pmatrix} s_1(t) \\ s_2(t) \end{pmatrix}.
\end{equation}
The system matrix can be diagonalized, with eigenvalues given by
\begin{equation}\label{eq:eigenvs}
	\lambda_{\pm} = -(\alpha_{2,2} + \alpha_{1,1}) \pm \sqrt{(\alpha_{2,2} - \alpha_{1,1})^2 + 4\alpha_{1,2}^2}.
\end{equation}
Since the Lindblad matrix
\begin{equation}
	\mathcal M=\pmat{\alpha_{1,1} & \alpha_{1,2} &0\\
		\alpha_{1,2} & \alpha_{2,2} & 0\\
		0 & 0 & 0}
\end{equation}
is symmetric positive semi-definite, the two eigenvalues in eq. \eqref{eq:eigenvs} are negative, and the corresponding eigenvectors are orthogonal. For instance, the channel defined by
\begin{equation}
	\dot{\rho}_{\bf s}=\frac{1}{2}\left(\mathcal{L}_1(\rho_{\bf s}) + \mathcal{L}_2(\rho_{\bf s})\right)+
	\frac{1}{4}\widetilde{\mathcal{L}}_{1,2}(\rho_{\bf s})
\end{equation}
is also defined by
\begin{equation}
	\dot{\rho}_{\bf s'}=\frac{3}{4}\mathcal{L}_1'(\rho_{{\bf s'}}) +
	\frac{1}{4}\mathcal{L}_2'(\rho_{{\bf s'}}),
\end{equation}
with 
\begin{gather}
	\rho_{\bf s}=\frac{1}{2}\left(\sigma_0+s_1\sigma_1+s_2\sigma_2\right)=
	\rho_{{\bf s'}}=\frac{1}{2}\left(\sigma_0+s'_1\sigma'_1+s'_2\sigma'_2\right),
\end{gather}
$\sigma'_1=\sigma_1+\sigma_2$, $\sigma'_2=\sigma_1-\sigma_2$, $s'_1=(s_1+s_2)/2$, $s'_2=(s_1-s_2)/2$, and $F_1'=\sigma'_1/2$, $F_2'=\sigma'_2/2$. 

Note that this type of channels does not appear in the classification results of Section \ref{sec:Markovclassification} Indeed, by imposing $R_1(t)=R_2(t)=I_2$ for all $t\geq 0$, those results are valid in the fixed standard Bloch basis $(\sigma_0,\sigma_1,\sigma_2)$ in which the matrix $D(t)$ of the affine part of $\mathcal C(t)$ is diagonal.

An opponency channel that leaves the unique hues fixed is given by
\begin{equation}
	\begin{cases}
		s_1(t) = s_1(0)e^{-2\alpha_{2,2}t}  \\[3mm]
		s_2(t) = s_2(0)e^{-2\alpha_{1,1}t}.
	\end{cases}
\end{equation}

\subsection{Numerical experiments}
\label{subsec3.2}

Here we present illustrative experiments showing how chromatic distortions induced by Markovian rebit dynamics may affect digital images. Our discussion is intentionally limited to a few basic examples, as a comprehensive analysis would require the introduction of advanced colorimetric notions that fall outside the scope of the present work. 

A preliminary observation is in order: at present, we do not possess a faithful numerical model of the chromatic state space of the rebit. Constructing such a model would require dedicated psycho-visual experiments aimed at identifying the four unique hues and the two chromatic opponency mechanisms. These experiments could be carried out by reproducing, with modern quantum optical devices and protocols, classical procedures such as hue-cancellation experiments originally introduced by Jameson and Hurvich more than seventy years ago, see \cite{JamesonHurvichI:55}. 

Such an experimental programme would make it possible to define a novel colour space fully consistent with both Hering's opponency theory and the quantum information-based framework adopted here, thereby avoiding the need to rely on the CIE (Commission International de l'\'Eclairage) colour spaces.
Nevertheless, it is possible to approximate the state space of Hering's rebit starting from the Hue-Chroma-Value (HCV) conical colour space, by suitably reshaping it so as to respect chromatic opponency. A detailed account of this construction is provided in \cite{Berthier:24SPM}.

For the first experiment, we consider the illuminant channel given by equation (\ref{Lindill}), with $\alpha_{1,1}=0.1$, $\alpha_{2,2}=0.25$, and $\alpha_{3,3}=0.15$, the weakly saturated orange illuminant corresponding to the stationary state $(s_{0,1},s_{0,2})=(1/3,0.2).$ Fig. \ref{exp1} illustrates the chromatic distortion induced by this channel. As $t$ grows, we observe a slight shift towards orange, particularly noticeable in the white square of the colour checker. The colours with reds and greens hues are the ones whose saturation decreases most rapidly. This is consistent with the choice of the illuminant and with the choice of $\alpha_{1,1}$ and $\alpha_{2,2}$: since $\alpha_{2,2}$ is greater than $\alpha_{1,1}$, the contraction in the red-green opponency axis is greater than in the yellow-blue opponency axis.
\begin{figure}[h]
	\centering
	\includegraphics[scale=0.55]{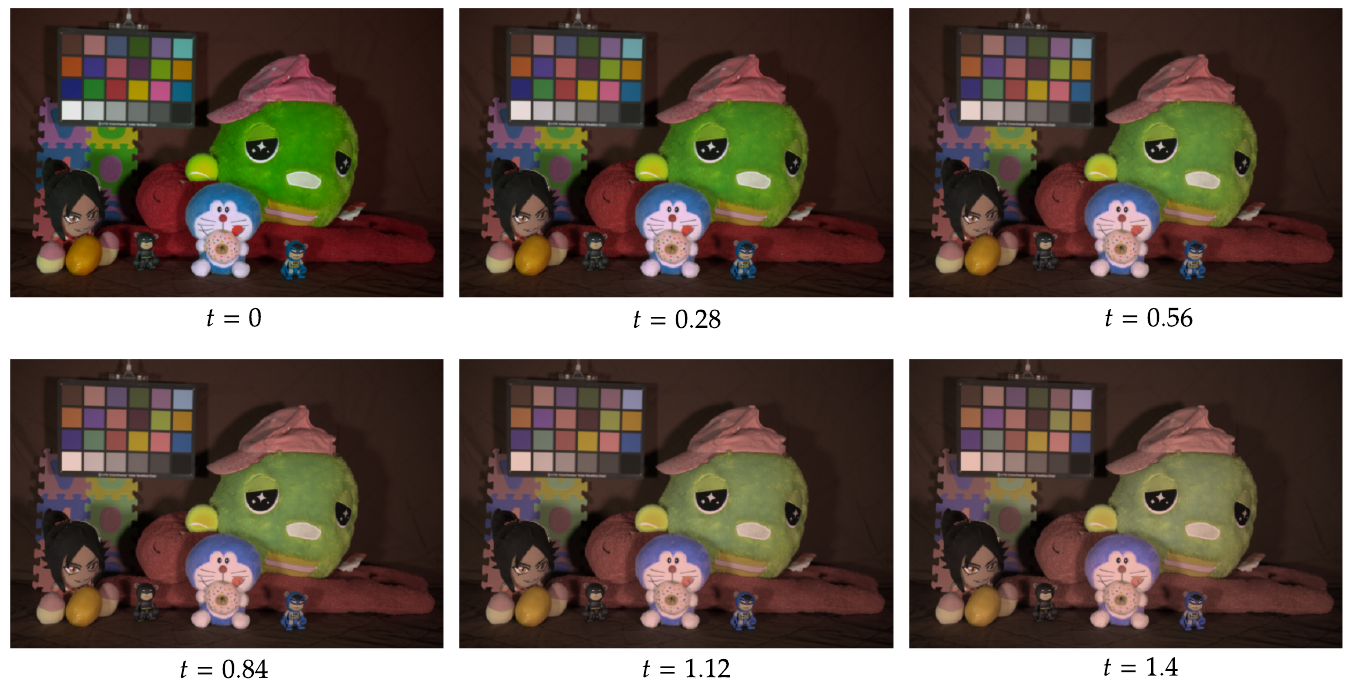}
	\caption{The chromatic distortion induced by an illuminant channel. The weakly saturated orange illuminant is the stationary state $(1/3,0.2)$, $\alpha_{1,1}=0.1$, $\alpha_{2,2}=0.25$, and $\alpha_{3,3}=0.15$.}
	\label{exp1}
\end{figure}

Figure \ref{exp2} shows the chromatic distortion induced by an illuminant channel with a highly saturated green illuminant. As expected, the red patches in the original image, such as the red square of the colour checker and the red support of the green puppet, become first achromatic  and then green as $t$ grows.

\begin{figure}[ht]
	\centering
	\includegraphics[scale=0.46]{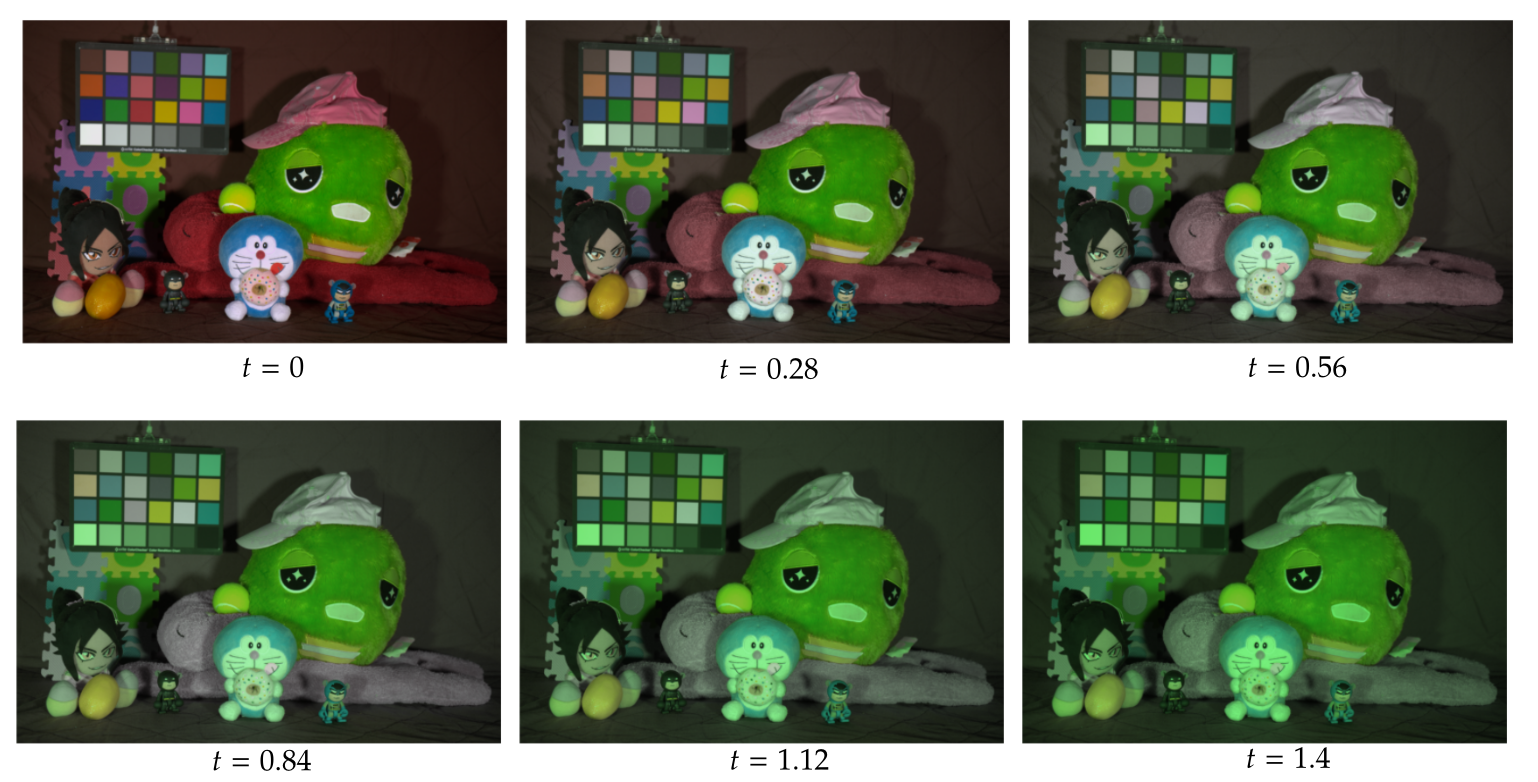}
	\caption{The chromatic distortion induced by an illuminant channel. The strongly saturated green illuminant is the stationary state $(-0.8,0)$.}
	\label{exp2}
\end{figure}

For the last experiment, we consider an opponency channel whose Lindbladian is given by equation (\ref{Lindopp}), with $\alpha_{1,2}=0$, $\alpha_{1,1}=0.005$, and $\alpha_{2,2}=0.4$. This channel aims at reproducing the colour vision of an observer whose perception is affected by a dysfunction in the red-green opponency, i.e. who suffers from red-green colours blindness. Fig. \ref{exp3} shows the chromatic distortion induced by this channel. In particular, the highly saturated colours with hue close to red become achromatic as $t$ grows.

\begin{figure}[ht]
	\centering
	\includegraphics[scale=0.5]{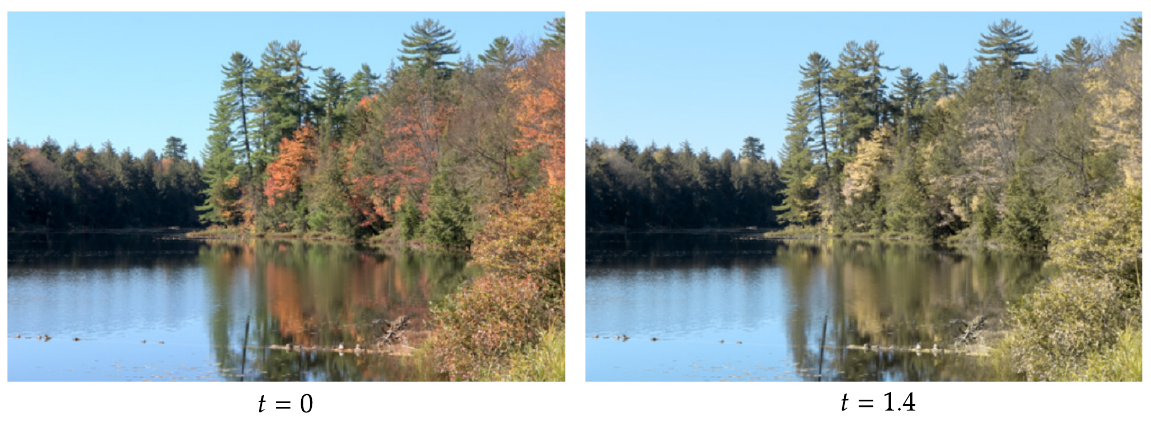}
	\caption{The chromatic distortion induced by an opponency channel with $\alpha_{1,1}=0.005$, and $\alpha_{2,2}=0.4$. }
	\label{exp3}
\end{figure}

\subsection{Relative entropy and chromatic distinguishability}
\label{subsec3.3}
We conclude this section by examining the connection between chromatic distinguishability and the level sets of relative entropy, illustrated in Figure \ref{fig:levelset01} and defined by
\begin{equation}
	\mathcal R_{\mu,{\bf t}} = \left\{ {\vs} \in \mathcal D \; : \; R(\rho_{\bf s} \| \rho_{\bf t}) = \mu \right\}.
\end{equation}



\begin{figure}[ht!]
	\centering
	\includegraphics[scale=0.2]{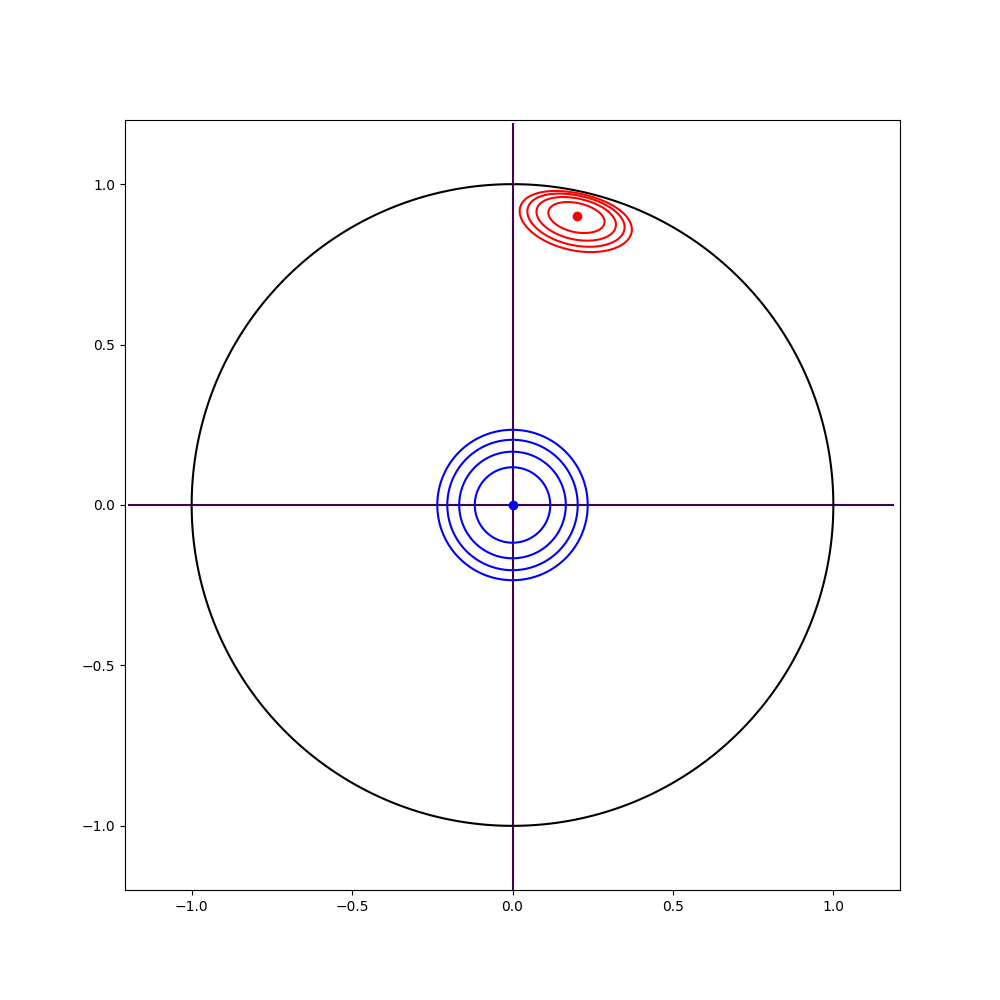}\ \ \ \includegraphics[scale=0.2]{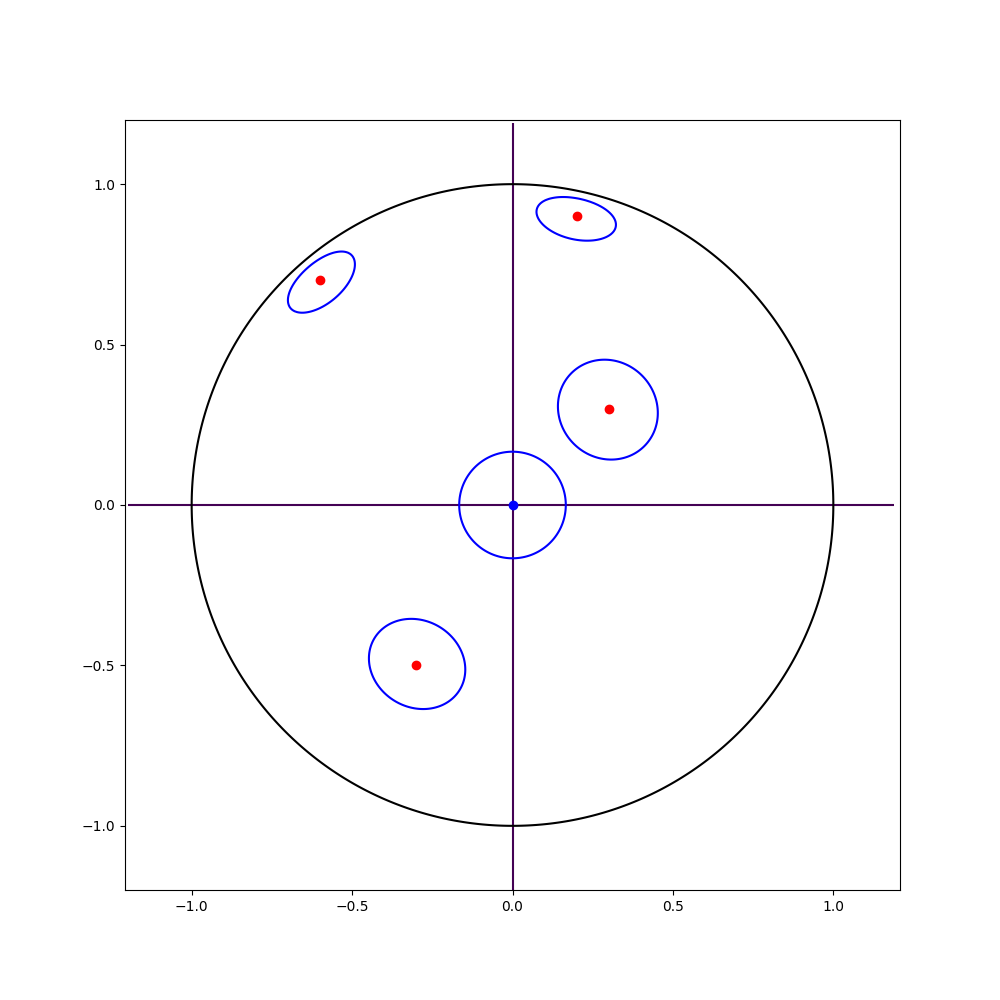}
	\caption{Level sets of relative entropy $\mu=0.01, 0.02, 0.03, 0.04$. For the blue ones ${\bf t}={\bf 0}$, for the red ones ${\bf t}=(0.2, 0.9)$ (left). Level sets of relative entropy $\mu=0.02$ for ${\bf t}=(0,0)$, $(0.2,0.9)$, $(0.3,0.3)$, $(-0.3,-0.5)$, and $(-0.6,0.7)$ (right).}
	\label{fig:levelset01}
\end{figure}

In particular, we show how the presence of a non-achromatic illuminant affects the human ability to discriminate between chromatic states.

For that, consider $\ee_a$, an observer adapted to an achromatic illuminant, two chromatic states $\s$ and $\bf t$, and an illuminant channel $\mathcal C(t)$. Then, by eq. \eqref{eq:rhost}, after the action of $\mathcal C(t)$, $\ee_a$ will perceive the following post-measurement chromatic states
\beq
\varphi^{t}_{\ee_a}(\s)=\rho_{\s(t)}=\mathcal C(t)\rho_\s, \quad \varphi^{t}_{\ee_a}({\bf t})=\rho_{{\bf t}(t)}=\mathcal C(t)\rho_{\bf t}, \quad t\ge 0.
\eeq
Therefore, 
\begin{equation}
	R(\varphi^{t}_{\ee_a}(\s) \| \varphi^{t}_{\ee_a}({\bf t})) = R(\mathcal C(t)\rho_{\s}\|\mathcal C(t)\rho_{\bf t})\le R(\rho_\s\|\rho_{\bf t}),
\end{equation}
where the last inequality follows from the DPI \eqref{eq:DPI}. 
This means that for the observer $\ee_a$ it is more difficult to distinguish chromatic states after the illuminant change.

To illustrate this phenomenon, we consider the level sets
\begin{equation}
	\mathcal{R}^t_\mu = \left\{{\vs} \in \mathcal{D} \;:\; R\left( \rho_{\bf s}(t) \,\|\, \rho_0(t) \right) = \mu \right\},
\end{equation}
and
\begin{equation}
	{\widetilde{\mathcal{R}}}^t_\mu=\left\{{\vs} \in \mathcal{D} \;:\; R\left( \rho_{\bf s} \,\|\, \rho_0(t) \right) = \mu \right\}.
\end{equation}
Figure \ref{fig:levelset04} shows the level sets $\mathcal{R}^t_\mu$, with $t=0$, $t=1.5$, and $\mu=0.01$, in red, and the level set ${\widetilde{\mathcal{R}}}^t_\mu$, with $t = 1.5$ and $\mu=0.01$, in green. The parameters of the illuminant channel are $\alpha_{1,1} = 0.125$, $\alpha_{2,2} = 0.25$, $\alpha_{3,3} = 0.3$, $\alpha_{1,3} = -0.05$, and $\alpha_{2,3} = -0.1.$

\begin{figure}[ht]
	\centering
	\includegraphics[scale=0.25]{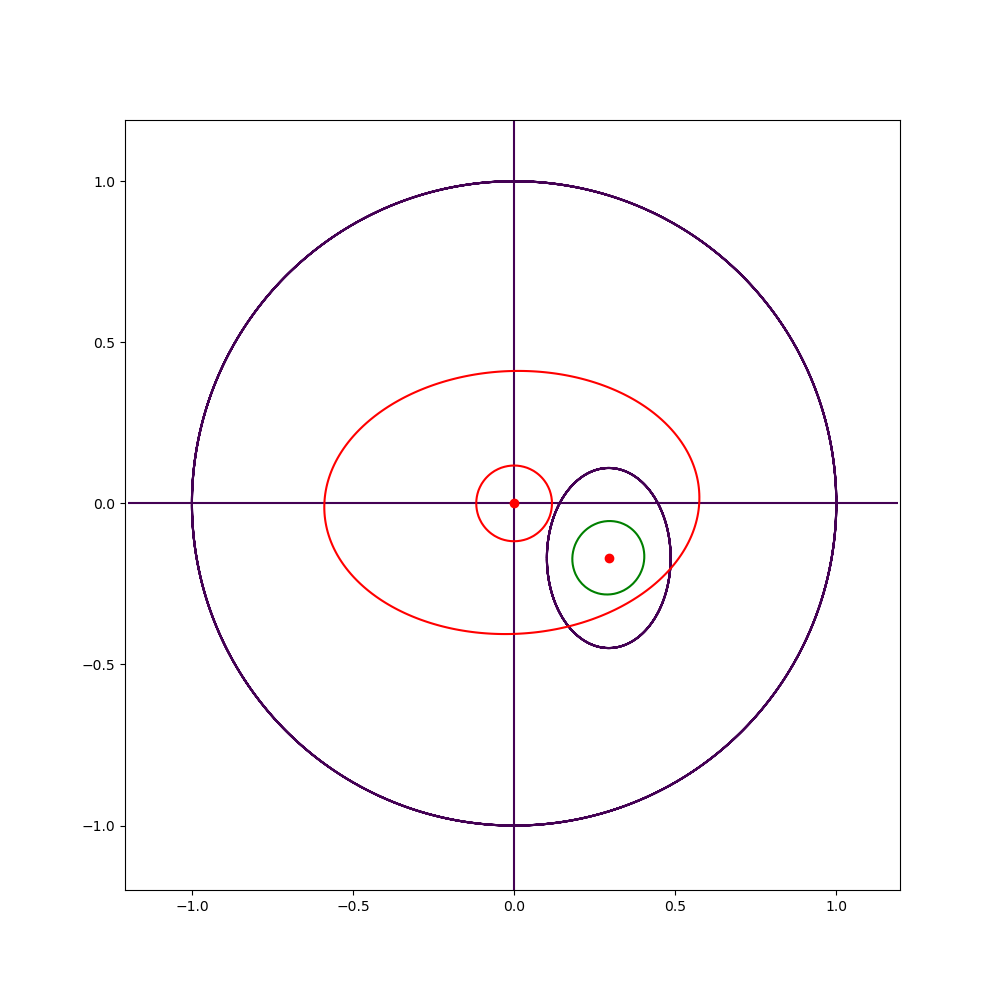}
	\caption{Level sets of the relative entropy for $t = 0$, $t = 1.5$, and $\mu = 0.01$. The parameters of the illuminant channel are: $\alpha_{1,1} = 0.125$, $\alpha_{2,2} = 0.25$, $\alpha_{3,3} = 0.3$, $\alpha_{1,3} = -0.05$, and $\alpha_{2,3} = -0.1.$}\label{fig:levelset04} 
\end{figure}

These results provide a mathematical justification for the observed reduction in colour distinguishability in the lower-right image of, for instance, Figure \ref{exp2}, as compared with the upper-left image.

\section{Conclusions}

We have analysed in detail the structure of Markovian channels of rebit systems. When the affine component of such channels is itself Markovian, their classification turns out to be relatively simple and can be obtained in a direct and explicit manner. By contrast, when the affine component does not form a Markovian semigroup, the classification becomes technically more involved and leads to the emergence of three distinct classes of Markovian channels.

We have also connected the resulting classification to the GKSL master equation adapted to rebit systems, thereby providing a clearer characterisation of the complete positivity conditions.

Finally, we have applied these results to a quantum information-based model of colour perception, in which chromatic states are represented as states of a rebit. Within this setting, we have shown how changes in the illuminant can be modelled by Markovian rebit channels acting on states, giving rise to chromatic distortion and to an inherent reduction of chromatic distinguishability, captured by the monotonicity of quantum relative entropy. We have also seen that other types of channels could be used to study colour vision deficiencies.

In future work, we plan to investigate the complementary phenomenon of chromatic adaptation. Our aim is to describe adaptation as a dynamical map acting on effects rather than on states, counteracting chromatic distortion and thus its associated loss of chromatic distinguishability.

\vskip 1cm

The authors declare no conflict of interest. The paper received no funding. Data availability: does not apply for this contribution.

\vskip 1cm

%

\begin{thebibliography}{31}
	\ifx \bisbn   \undefined \def \bisbn  #1{ISBN #1}\fi
	\ifx \binits  \undefined \def \binits#1{#1}\fi
	\ifx \bauthor  \undefined \def \bauthor#1{#1}\fi
	\ifx \batitle  \undefined \def \batitle#1{#1}\fi
	\ifx \bjtitle  \undefined \def \bjtitle#1{#1}\fi
	\ifx \bvolume  \undefined \def \bvolume#1{\textbf{#1}}\fi
	\ifx \byear  \undefined \def \byear#1{#1}\fi
	\ifx \bissue  \undefined \def \bissue#1{#1}\fi
	\ifx \bfpage  \undefined \def \bfpage#1{#1}\fi
	\ifx \blpage  \undefined \def \blpage #1{#1}\fi
	\ifx \burl  \undefined \def \burl#1{\textsf{#1}}\fi
	\ifx \doiurl  \undefined \def \doiurl#1{\url{https://doi.org/#1}}\fi
	\ifx \betal  \undefined \def \betal{\textit{et al.}}\fi
	\ifx \binstitute  \undefined \def \binstitute#1{#1}\fi
	\ifx \binstitutionaled  \undefined \def \binstitutionaled#1{#1}\fi
	\ifx \bctitle  \undefined \def \bctitle#1{#1}\fi
	\ifx \beditor  \undefined \def \beditor#1{#1}\fi
	\ifx \bpublisher  \undefined \def \bpublisher#1{#1}\fi
	\ifx \bbtitle  \undefined \def \bbtitle#1{#1}\fi
	\ifx \bedition  \undefined \def \bedition#1{#1}\fi
	\ifx \bseriesno  \undefined \def \bseriesno#1{#1}\fi
	\ifx \blocation  \undefined \def \blocation#1{#1}\fi
	\ifx \bsertitle  \undefined \def \bsertitle#1{#1}\fi
	\ifx \bsnm \undefined \def \bsnm#1{#1}\fi
	\ifx \bsuffix \undefined \def \bsuffix#1{#1}\fi
	\ifx \bparticle \undefined \def \bparticle#1{#1}\fi
	\ifx \barticle \undefined \def \barticle#1{#1}\fi
	\bibcommenthead
	\ifx \bconfdate \undefined \def \bconfdate #1{#1}\fi
	\ifx \botherref \undefined \def \botherref #1{#1}\fi
	\ifx \url \undefined \def \url#1{\textsf{#1}}\fi
	\ifx \bchapter \undefined \def \bchapter#1{#1}\fi
	\ifx \bbook \undefined \def \bbook#1{#1}\fi
	\ifx \bcomment \undefined \def \bcomment#1{#1}\fi
	\ifx \oauthor \undefined \def \oauthor#1{#1}\fi
	\ifx \citeauthoryear \undefined \def \citeauthoryear#1{#1}\fi
	\ifx \endbibitem  \undefined \def \endbibitem {}\fi
	\ifx \bconflocation  \undefined \def \bconflocation#1{#1}\fi
	\ifx \arxivurl  \undefined \def \arxivurl#1{\textsf{#1}}\fi
	\csname PreBibitemsHook\endcsname
	
	\bibitem[\protect\citeauthoryear{Wootters}{1990}]{Wootters:1990}
	\begin{barticle}
		\bauthor{\bsnm{Wootters}, \binits{W.K.}}:
		\batitle{Local accessibility of quantum states}.
		\bjtitle{Complexity, entropy and the physics of information}
		\bvolume{8},
		\bfpage{39}--\blpage{46}
		(\byear{1990})
	\end{barticle}
	\endbibitem
	
	\bibitem[\protect\citeauthoryear{McKague et~al.}{2009}]{McKague:2009}
	\begin{barticle}
		\bauthor{\bsnm{McKague}, \binits{M.}},
		\bauthor{\bsnm{Mosca}, \binits{M.}},
		\bauthor{\bsnm{Gisin}, \binits{N.}}:
		\batitle{Simulating quantum systems using real {H}ilbert spaces}.
		\bjtitle{Physical review letters}
		\bvolume{102}(\bissue{2}),
		\bfpage{020505}
		(\byear{2009})
	\end{barticle}
	\endbibitem
	
	\bibitem[\protect\citeauthoryear{Aleksandrova et~al.}{2013}]{Aleksandrova:13}
	\begin{barticle}
		\bauthor{\bsnm{Aleksandrova}, \binits{A.}},
		\bauthor{\bsnm{Borish}, \binits{V.}},
		\bauthor{\bsnm{Wootters}, \binits{W.K.}}:
		\batitle{Real-vector-space quantum theory with a universal quantum bit}.
		\bjtitle{Physical Review A}
		\bvolume{87}(\bissue{5}),
		\bfpage{052106}
		(\byear{2013})
	\end{barticle}
	\endbibitem
	
	\bibitem[\protect\citeauthoryear{Wootters}{2014}]{Wootters:2014}
	\begin{barticle}
		\bauthor{\bsnm{Wootters}, \binits{W.K.}}:
		\batitle{The rebit three-tangle and its relation to two-qubit entanglement}.
		\bjtitle{Journal of Physics A: Mathematical and Theoretical}
		\bvolume{47}(\bissue{42}),
		\bfpage{424037}
		(\byear{2014})
	\end{barticle}
	\endbibitem
	
	\bibitem[\protect\citeauthoryear{Moretti and Oppio}{2017}]{Moretti:2017quantum}
	\begin{barticle}
		\bauthor{\bsnm{Moretti}, \binits{V.}},
		\bauthor{\bsnm{Oppio}, \binits{M.}}:
		\batitle{Quantum theory in real {H}ilbert space: How the complex {H}ilbert
			space structure emerges from {P}oincar{\'e} symmetry}.
		\bjtitle{Reviews in Mathematical Physics}
		\bvolume{29}(\bissue{06}),
		\bfpage{1750021}
		(\byear{2017})
	\end{barticle}
	\endbibitem
	
	\bibitem[\protect\citeauthoryear{Koh et~al.}{2018}]{Koh:2018}
	\begin{barticle}
		\bauthor{\bsnm{Koh}, \binits{D.E.}},
		\bauthor{\bsnm{Niu}, \binits{M.Y.}},
		\bauthor{\bsnm{Yoder}, \binits{T.J.}}:
		\batitle{Quantum simulation from the bottom up: the case of rebits}.
		\bjtitle{Journal of Physics A: Mathematical and Theoretical}
		\bvolume{51}(\bissue{19}),
		\bfpage{195302}
		(\byear{2018})
	\end{barticle}
	\endbibitem
	
	\bibitem[\protect\citeauthoryear{Renou et~al.}{2021}]{Renou:2021}
	\begin{barticle}
		\bauthor{\bsnm{Renou}, \binits{M.-O.}},
		\bauthor{\bsnm{Trillo}, \binits{D.}},
		\bauthor{\bsnm{Weilenmann}, \binits{M.}},
		\bauthor{\bsnm{Le}, \binits{T.P.}},
		\bauthor{\bsnm{Tavakoli}, \binits{A.}},
		\bauthor{\bsnm{Gisin}, \binits{N.}},
		\bauthor{\bsnm{Ac{\'\i}n}, \binits{A.}},
		\bauthor{\bsnm{Navascu{\'e}s}, \binits{M.}}:
		\batitle{Quantum theory based on real numbers can be experimentally falsified}.
		\bjtitle{Nature}
		\bvolume{600}(\bissue{7890}),
		\bfpage{625}--\blpage{629}
		(\byear{2021})
	\end{barticle}
	\endbibitem
	
	\bibitem[\protect\citeauthoryear{Fuchs et~al.}{2022}]{Fuchs:2022}
	\begin{botherref}
		\oauthor{\bsnm{Fuchs}, \binits{C.A.}},
		\oauthor{\bsnm{Olchanyi}, \binits{M.}},
		\oauthor{\bsnm{Weiss}, \binits{M.B.}}:
		Quantum mechanics? it's all fun and games until someone loses an $i$.
		arXiv preprint arXiv:2206.15343
		(2022)
	\end{botherref}
	\endbibitem
	
	\bibitem[\protect\citeauthoryear{Chiribella et~al.}{2023}]{Chiribella:2023}
	\begin{bchapter}
		\bauthor{\bsnm{Chiribella}, \binits{G.}},
		\bauthor{\bsnm{Davidson}, \binits{K.R.}},
		\bauthor{\bsnm{Paulsen}, \binits{V.I.}},
		\bauthor{\bsnm{Rahaman}, \binits{M.}}:
		\bctitle{Positive maps and entanglement in real {H}ilbert spaces}.
		In: \bbtitle{Annales Henri Poincar{\'e}},
		vol. \bseriesno{24},
		pp. \bfpage{4139}--\blpage{4168}
		(\byear{2023}).
		\bcomment{Springer}
	\end{bchapter}
	\endbibitem
	
	\bibitem[\protect\citeauthoryear{Caves et~al.}{2002}]{Caves:2002}
	\begin{barticle}
		\bauthor{\bsnm{Caves}, \binits{C.M.}},
		\bauthor{\bsnm{Fuchs}, \binits{C.A.}},
		\bauthor{\bsnm{Schack}, \binits{R.}}:
		\batitle{Unknown quantum states: The quantum de finetti representation}.
		\bjtitle{Journal of Mathematical Physics}
		\bvolume{43},
		\bfpage{5437}--\blpage{4559}
		(\byear{2002})
	\end{barticle}
	\endbibitem
	
	\bibitem[\protect\citeauthoryear{Caves et~al.}{2001}]{Caves:2001}
	\begin{barticle}
		\bauthor{\bsnm{Caves}, \binits{C.M.}},
		\bauthor{\bsnm{Fuchs}, \binits{C.A.}},
		\bauthor{\bsnm{Rungta}, \binits{P.}}:
		\batitle{Entanglement of formation of an arbitrary state of two rebits}.
		\bjtitle{Foundations of Physics Letters}
		\bvolume{14},
		\bfpage{199}--\blpage{212}
		(\byear{2001})
	\end{barticle}
	\endbibitem
	
	\bibitem[\protect\citeauthoryear{Ald\'e et~al.}{2023}]{Alde:2023}
	\begin{barticle}
		\bauthor{\bsnm{Ald\'e}, \binits{M.}},
		\bauthor{\bsnm{Berthier}, \binits{M.}},
		\bauthor{\bsnm{Provenzi}, \binits{E.}}:
		\batitle{The classification of rebit quantum channels}.
		\bjtitle{Journal of Physics A: Mathematical and Theoretical}
		\bvolume{56}(\bissue{495301}),
		\bfpage{1}--\blpage{17}
		(\byear{2023})
	\end{barticle}
	\endbibitem
	
	\bibitem[\protect\citeauthoryear{Engel and Nagel}{2000}]{Engel:2000}
	\begin{bbook}
		\bauthor{\bsnm{Engel}, \binits{K.-J.}},
		\bauthor{\bsnm{Nagel}, \binits{R.}}:
		\bbtitle{One-parameter Semigroups for Linear Evolution Equations}.
		\bpublisher{Springer},
		\blocation{New York}
		(\byear{2000})
	\end{bbook}
	\endbibitem
	
	\bibitem[\protect\citeauthoryear{Heinosaari and Ziman}{2011}]{Heinosaari:2011}
	\begin{bbook}
		\bauthor{\bsnm{Heinosaari}, \binits{T.}},
		\bauthor{\bsnm{Ziman}, \binits{M.}}:
		\bbtitle{The Mathematical Language of Quantum Theory: from Uncertainty to
			Entanglement}.
		\bpublisher{Cambridge University Press},
		\blocation{UK}
		(\byear{2011})
	\end{bbook}
	\endbibitem
	
	\bibitem[\protect\citeauthoryear{Rubilar and Schultz}{2020}]{Rubilar:2020}
	\begin{barticle}
		\bauthor{\bsnm{Rubilar}, \binits{F.}},
		\bauthor{\bsnm{Schultz}, \binits{L.}}:
		\batitle{Adjoint orbits of sl (2, r) and their geometry}.
		\bjtitle{Pro Mathematica}
		\bvolume{31}(\bissue{61}),
		\bfpage{73}--\blpage{107}
		(\byear{2020})
	\end{barticle}
	\endbibitem
	
	\bibitem[\protect\citeauthoryear{Breuer and Petruccione}{2002}]{Breuer:2002}
	\begin{bbook}
		\bauthor{\bsnm{Breuer}, \binits{H.-P.}},
		\bauthor{\bsnm{Petruccione}, \binits{F.}}:
		\bbtitle{The Theory of Open Quantum Systems}.
		\bpublisher{Oxford University Press},
		\blocation{USA}
		(\byear{2002})
	\end{bbook}
	\endbibitem
	
	\bibitem[\protect\citeauthoryear{Gorini et~al.}{1976}]{Gorini:76}
	\begin{barticle}
		\bauthor{\bsnm{Gorini}, \binits{V.}},
		\bauthor{\bsnm{Kossakowski}, \binits{A.}},
		\bauthor{\bsnm{Sudarshan}, \binits{E.C.G.}}:
		\batitle{Completely positive dynamical semigroups of n-level systems}.
		\bjtitle{Journal of Mathematical Physics}
		\bvolume{17}(\bissue{5}),
		\bfpage{821}--\blpage{825}
		(\byear{1976})
	\end{barticle}
	\endbibitem
	
	\bibitem[\protect\citeauthoryear{Lindblad}{1976}]{Lindblad:76}
	\begin{barticle}
		\bauthor{\bsnm{Lindblad}, \binits{G.}}:
		\batitle{On the generators of quantum dynamical semigroups}.
		\bjtitle{Communications in Mathematical Physics}
		\bvolume{48}(\bissue{2}),
		\bfpage{119}--\blpage{130}
		(\byear{1976})
	\end{barticle}
	\endbibitem
	
	\bibitem[\protect\citeauthoryear{Baez}{2012}]{Baez:12}
	\begin{barticle}
		\bauthor{\bsnm{Baez}, \binits{J.C.}}:
		\batitle{Division algebras and quantum theory}.
		\bjtitle{Foundations of Physics}
		\bvolume{42}(\bissue{7}),
		\bfpage{819}--\blpage{855}
		(\byear{2012})
	\end{barticle}
	\endbibitem
	
	\bibitem[\protect\citeauthoryear{Yao et~al.}{2019}]{Yao:2019}
	\begin{botherref}
		\oauthor{\bsnm{Yao}, \binits{C.}},
		\oauthor{\bsnm{Ma}, \binits{Y.}},
		\oauthor{\bsnm{Chen}, \binits{L.}}:
		Derivations on semi-simple {J}ordan algebras and its applications.
		ar{X}iv preprint ar{X}iv:1906.04552
		(2019)
	\end{botherref}
	\endbibitem
	
	\bibitem[\protect\citeauthoryear{Berthier and
		Provenzi}{2022}]{BerthierProvenzi:2022PRS}
	\begin{barticle}
		\bauthor{\bsnm{Berthier}, \binits{M.}},
		\bauthor{\bsnm{Provenzi}, \binits{E.}}:
		\batitle{Quantum measurement and colour perception: theory and applications}.
		\bjtitle{Proceedings of the Royal Society A}
		\bvolume{478}(\bissue{2258}),
		\bfpage{20210508}
		(\byear{2022})
	\end{barticle}
	\endbibitem
	
	\bibitem[\protect\citeauthoryear{Berthier et~al.}{2022}]{Berthier:22SIAM}
	\begin{barticle}
		\bauthor{\bsnm{Berthier}, \binits{M.}},
		\bauthor{\bsnm{Prencipe}, \binits{N.}},
		\bauthor{\bsnm{Provenzi}, \binits{E.}}:
		\batitle{A quantum information-based refoundation of color perception
			concepts}.
		\bjtitle{SIAM Journal on Imaging Sciences}
		\bvolume{15}(\bissue{4}),
		\bfpage{1944}--\blpage{1976}
		(\byear{2022})
	\end{barticle}
	\endbibitem
	
	\bibitem[\protect\citeauthoryear{Berthier}{2020}]{Berthier:2020}
	\begin{barticle}
		\bauthor{\bsnm{Berthier}, \binits{M.}}:
		\batitle{Geometry of color perception. {P}art 2: perceived colors from real
			quantum states and {H}ering?s rebit}.
		\bjtitle{The Journal of Mathematical Neuroscience}
		\bvolume{10}(\bissue{1}),
		\bfpage{1}--\blpage{25}
		(\byear{2020})
	\end{barticle}
	\endbibitem
	
	\bibitem[\protect\citeauthoryear{Kraus et~al.}{1983}]{Kraus:83}
	\begin{botherref}
		\oauthor{\bsnm{Kraus}, \binits{K.}},
		\oauthor{\bsnm{B{\"o}hm}, \binits{A.}},
		\oauthor{\bsnm{Dollard}, \binits{J.D.}},
		\oauthor{\bsnm{Wootters}, \binits{W.}}:
		States, effects, and operations: fundamental notions of quantum theory.
		lectures in mathematical physics at the university of texas at austin.
		Lecture notes in physics
		\textbf{190}
		(1983)
	\end{botherref}
	\endbibitem
	
	\bibitem[\protect\citeauthoryear{Busch et~al.}{1997}]{Busch:97}
	\begin{bbook}
		\bauthor{\bsnm{Busch}, \binits{P.}},
		\bauthor{\bsnm{Grabowski}, \binits{M.}},
		\bauthor{\bsnm{Lahti}, \binits{P.J.}}:
		\bbtitle{Operational Quantum Physics}
		vol. \bseriesno{31}.
		\bpublisher{Springer},
		\blocation{Berlin}
		(\byear{1997})
	\end{bbook}
	\endbibitem
	
	\bibitem[\protect\citeauthoryear{Berthier et~al.}{2024}]{Berthier:24SPM}
	\begin{barticle}
		\bauthor{\bsnm{Berthier}, \binits{M.}},
		\bauthor{\bsnm{Prencipe}, \binits{N.}},
		\bauthor{\bsnm{Provenzi}, \binits{E.}}:
		\batitle{Split-quaternions for perceptual white balance}.
		\bjtitle{IEEE Signal Processing Magazine}
		\bvolume{41}(\bissue{4}),
		\bfpage{42}--\blpage{50}
		(\byear{2024})
	\end{barticle}
	\endbibitem
	
	\bibitem[\protect\citeauthoryear{Cortese}{2002}]{Cortese:02}
	\begin{botherref}
		\oauthor{\bsnm{Cortese}, \binits{J.}}:
		Relative entropy and single qubit {H}olevo-{S}chumacher-{W}estmoreland channel
		capacity.
		arXiv: Quantum Physics
		(2002)
	\end{botherref}
	\endbibitem
	
	\bibitem[\protect\citeauthoryear{Wilde}{2017}]{Wilde:2017}
	\begin{bbook}
		\bauthor{\bsnm{Wilde}, \binits{M.M.}}:
		\bbtitle{Quantum Information Theory}.
		\bpublisher{Cambridge University Press},
		\blocation{Cambridge, UK}
		(\byear{2017})
	\end{bbook}
	\endbibitem
	
	\bibitem[\protect\citeauthoryear{Witten}{2020}]{Witten:2020}
	\begin{barticle}
		\bauthor{\bsnm{Witten}, \binits{E.}}:
		\batitle{A mini-introduction to information theory}.
		\bjtitle{La Rivista del Nuovo Cimento}
		\bvolume{43},
		\bfpage{187}--\blpage{227}
		(\byear{2020})
	\end{barticle}
	\endbibitem
	
	\bibitem[\protect\citeauthoryear{Auletta et~al.}{2009}]{Auletta:09}
	\begin{bbook}
		\bauthor{\bsnm{Auletta}, \binits{G.}},
		\bauthor{\bsnm{Fortunato}, \binits{M.}},
		\bauthor{\bsnm{Parisi}, \binits{G.}}:
		\bbtitle{Quantum Mechanics}.
		\bpublisher{Cambridge University Press},
		\blocation{UK}
		(\byear{2009})
	\end{bbook}
	\endbibitem
	
	\bibitem[\protect\citeauthoryear{Hurvich and
		Jameson}{1955}]{JamesonHurvichI:55}
	\begin{barticle}
		\bauthor{\bsnm{Hurvich}, \binits{L.M.}},
		\bauthor{\bsnm{Jameson}, \binits{D.}}:
		\batitle{Some quantitative aspects of an opponent-colors theory. {I.} chromatic
			responses and spectral saturation}.
		\bjtitle{JOSA}
		\bvolume{45}(\bissue{7}),
		\bfpage{546}--\blpage{552}
		(\byear{1955})
	\end{barticle}
	\endbibitem
	
\end{thebibliography}


\end{document}